% This is samplepaper.tex, a sample chapter demonstrating the
% LLNCS macro package for Springer Computer Science proceedings;
% Version 2.20 of 2017/10/04
%
\documentclass[runningheads]{llncs}

\usepackage{algorithm}
\usepackage{algpseudocode}
\usepackage{amsmath}
\usepackage{amssymb}
\usepackage{color}
\usepackage{csquotes}
\usepackage{graphicx}
\usepackage{pifont}
\usepackage{wrapfig}

 % natural numbers
 % integers

\newcommand{\myAnd}{\ensuremath{\wedge}}

\newcommand{\A}{\ensuremath{\mathcal{A}}}

\newcommand{\N}{\ensuremath{\mathcal{N}}}

\newcommand{\T}{\ensuremath{\mathcal{T}}}

\newcommand{\init}{\ensuremath{\mathit{init}}}
\newcommand{\op}{\ensuremath{\mathit{op}}}
\newcommand{\linit}{\ensuremath{L_{\mathit{init}}}}
\newcommand{\execute}{\ensuremath{\mathit{execute}}}
\newcommand{\event}{\ensuremath{\mathit{event}}}

\newcommand{\size}{\ensuremath{\mathit{Usize}}}
\newcommand{\abssize}{\ensuremath{\mathit{absUsize}}}

\newcommand{\muengen}{\ensuremath{\mathit{Muen}}}
\newcommand{\setgen}{\ensuremath{\mathit{SetGen}}}

% Muen kernel stuff

\newcommand{\subjectspecs} {\ensuremath{\mathit{subject\textrm{-}specs}}}
\newcommand{\subjectdescs} {\ensuremath{\mathit{subject\textrm{-}descriptors}}}
\newcommand{\schedplans} {\ensuremath{\mathit{scheduling\textrm{-}plans}}}
\newcommand{\IOBitmap} {\ensuremath{\mathit{IOBitmap}}}
\newcommand{\curmajfr} {\ensuremath{\mathit{cur\textrm{-}maj\textrm{-}frame}}}
\newcommand{\curminfr} {\ensuremath{\mathit{cur\textrm{-}min\textrm{-}frame}}}
\newcommand{\globevents} {\ensuremath{\mathit{global\textrm{-}events}}}
\newcommand{\pendingevents} {\ensuremath{\mathit{pending\textrm{-}events}}}
\newcommand{\vectorrouting} {\ensuremath{\mathit{vector\textrm{-}routing}}}
\newcommand{\page} {\ensuremath{\mathit{page}}}

% Abstract spec
\newcommand{\chmem} {\ensuremath{\mathit{chmem}}}
\newcommand{\vmem} {\ensuremath{\mathit{VMem}}}
\newcommand{\cmap} {\ensuremath{\mathit{cmap}}}
\newcommand{\ch} {\ensuremath{\mathit{ch}}}
\newcommand{\enabled} {\ensuremath{\mathit{enabled}}}
\newcommand{\ticks} {\ensuremath{\mathit{ticks}}}
\newcommand{\minticks} {\ensuremath{\mathit{minticks}}}
\newcommand{\idealmajfp} {\ensuremath{\mathit{idealmajfp}}}
\newcommand{\minorfp} {\ensuremath{\mathit{minorfp}}}
\newcommand{\idealcycles} {\ensuremath{\mathit{idealcycles}}}
\newcommand{\majorfp} {\ensuremath{\mathit{majorfp}}}
\newcommand{\cycles} {\ensuremath{\mathit{cycles}}}
\newcommand{\last} {\ensuremath{\mathit{last}}}
\newcommand{\perms} {\ensuremath{\mathit{perms}}}
\newcommand{\ansubs} {\ensuremath{\mathit{ANSubs}}}

% Muen parametric

\newcommand{\pmem} {\ensuremath{\mathit{PMem}}}
\newcommand{\PT} {\ensuremath{\mathit{PT}}}
\newcommand{\pt} {\ensuremath{\mathit{ptab}}}

\newcommand{\vmcs} {\ensuremath{\mathit{vmcs}}}
\newcommand{\vmptr} {\ensuremath{\mathit{vmptr}}}
\newcommand{\eptp} {\ensuremath{\mathit{eptp}}}

\newcommand{\nsubs} {\ensuremath{\mathit{NSubs}}}
\newcommand{\ncpus} {\ensuremath{\mathit{NCPUs}}}

% Checking algo

\newcommand{\LA} {\ensuremath{\mathit{LA}}}
\newcommand{\PA} {\ensuremath{\mathit{PA}}}
\newcommand{\psize} {\ensuremath{\mathit{size}}}
\newcommand{\PTBitArray} {PTBitArray}

%\newtheorem{theorem}{Theorem}
% Used for displaying a sample figure. If possible, figure files should
% be included in EPS format.
%
% If you use the hyperref package, please uncomment the following line
% to display URLs in blue roman font according to Springer's eBook style:
% \renewcommand\UrlFont{\color{blue}\rmfamily}

\begin{document}
\title{Verification of a Generative Separation Kernel}%\thanks{Supported by organization x.}}
%
%\titlerunning{Abbreviated paper title}
% If the paper title is too long for the running head, you can set
% an abbreviated paper title here
%
\author{Inzemamul Haque\inst{1}
%\orcidID{0000-1111-2222-3333} 
\and
Deepak D'Souza\inst{1}
%\orcidID{1111-2222-3333-4444} 
\and
Habeeb P\inst{1}
%\orcidID{2222--3333-4444-5555} 
\and
Arnab Kundu\inst{2} 
\and
Ganesh Babu\inst{2}}
\authorrunning{I. Haque et al.}
% First names are abbreviated in the running head.
% If there are more than two authors, 'et al.' is used.
%
\institute{Indian Institute of Science, Bangalore, India 
\email{\{inzemamul,deepakd,habeebp\}@iisc.ac.in} \and
Centre for Artificial Intelligence and Robotics, Bangalore, India
\email{\{arnab.kundu.in,ganeshbabu.vn\}@gmail.com}\\
%\url{http://www.springer.com/gp/computer-science/lncs} %\and
%ABC Institute, Rupert-Karls-University Heidelberg, Heidelberg, Germany\\
%\email{\{abc,lncs\}@uni-heidelberg.de}
}
\maketitle              % typeset the header of the contribution
\begin{abstract}
We present a formal verification of the functional
correctness of the Muen Separation Kernel.
Muen is representative of the class of modern separation kernels
that leverage hardware virtualization support, and are
\emph{generative} in nature in that they generate a specialized kernel
for each system configuration. % or policy.
These features pose substantial challenges to existing verification techniques.
We propose a verification framework called conditional parametric
refinement which allows us to formally reason about generative systems.
We use this framework to carry out a conditional refinement-based proof of
correctness of the Muen kernel generator.
%% In effect this allows us to check correctness of the generated kernel for
%% a given system configuration, by simply checking whether the
%% generated parameters satisfy the conditions assumed in the refinement
%% proof.
%% Hardware components are modelled by data-structures and imperative
%% program code.
Our analysis of several system configurations shows that our
technique is effective in producing mechanized proofs of
correctness, and also in identifying issues that may compromise
the separation property.

\keywords{Separation kernels \and Verification \and Security.}
\end{abstract}
\section{Introduction}
\label{sec:intro}

A separation kernel (SK) is a small specialized operating system or
microkernel, that provides a sand-boxed or ``separate''  execution
environment for a given set of processes (also called ``partitions''
or ``subjects'').
The subjects may communicate only via declared memory channels,
and are otherwise isolated from each other.
Unlike a general operating system % OS
these kernels usually have a
fixed set of subjects to run according to a specific schedule on
the different CPUs of a processor-based system.

The idea of a separation kernel was proposed by Rushby
\cite{rushby} as a way of breaking down the process of reasoning
about the overall security of a computer system.
The overall security of a system, in his view, derives from (a)
the physical separation of its components, and (b) the specific security
properties of its individual components or subjects.
A separation kernel thus focuses on providing the separation property
(a) above.
Separation kernels have since been embraced extensively in military
and aerospace domains, for building security and safety-critical
applications.
%% % Needless to say
%% The correct functioning of the kernel itself is
%% of critical importance in guaranteeing the secure and timely
%% execution of the subjects.

Our focus in this paper is in formally verifying that a separation
kernel does indeed provide the separation property, and more generally that
it functions \emph{correctly} (which would include for example, that
it executes subjects according to a specified schedule).
One way of obtaining a high level of assurance in the
correct functioning of a system is to carry out a refinement-based
proof of functional correctness \cite{Hoare75a,Hoare-tr-1985},
as has been done in the
context of OS verification \cite{rushby,klein}.
Here one specifies an abstract model of
the system's behaviour, and then shows that the system
implementation conforms to the abstract specification.
A refinement proof typically subsumes all the standard security
properties related to separation, like no-exfiltration/infiltration
and temporal and spatial separation of subjects considered for
instance in \cite{heitmeyer}.

Our specific aim in this paper is to formally verify the correctness of the
Muen separation kernel \cite{muen},
which is an open-source representative of a class of modern
separation kernels (including several commercial products like
GreenHills Integrity Multivisor \cite{GHIM},  LynxSecure \cite{Lynx}, PikeOS
\cite{pikeos}, VxWorks MILS platform \cite{Wind}, and XtratuM \cite{xtratum})
that use hardware virtualization support and
are \emph{generative} in nature.
By the latter we mean that these tools take an input specification
describing the subjects and the schedule of execution, and
generate a tailor-made processor-based system that includes subject
binaries, page tables, and a kernel that acts like a Virtual
Machine Monitor (VMM).

When we took up this verification task over three years ago, a
few challenges stood out.
How does one reason about a system whose kernel makes use of
virtualization features in the underlying hardware in addition to
Assembly and a high-level language like Ada?
Secondly, how does one reason about a complex 4-level paging structure and
the translation function it induces?
Finally, and most importantly, how do we reason about a generative
system to show that for \emph{every} possible input specification,
it produces a correct artifact?
A possible approach for the latter would be to verify the generator code, along
the lines of the CompCert project \cite{xavier}.
However with the generator code running close to 41K LOC, with little
or no compositional structure, and not being designed for verification,
this would be a formidable task.
One could alternatively perform translation validation
\cite{pnueli} of an input specification of interest, but this
would require manual effort from scratch each time.

We overcame the first challenge of virtualization by simply
choosing to model the virtualization layer (in this case Intel's
VT-x layer) along with the rest of the hardware components like
registers and memory, programmatically in software.
Thus we modeled VT-x components like the per-CPU VMX-Timer and
EPTP as 64-bit variables in Ada, and implicit structures like the
VMCS as a record with appropriate fields as specified by Intel
\cite{intel2018}.
Instructions like VMLAUNCH were then implemented as methods that
accessed these variables.
As it turned out, virtualization was more of a boon than a bane,
as it simplifies the kernel's (and hence the prover's) job of
preemption and context-switching.

We solved the third problem of generativeness (and coincidentally
the second problem of page tables too), by leveraging a key
feature of such systems: the kernel is essentially a
\emph{template} which is largely fixed, independent of the input
specification.
The kernel accesses variables which represent input-specific details
like subject details and the schedule, and these structures are
generated by Muen based on the given input specification.
The kernel can thus be viewed as a \emph{parametric} program, much like a
method that computes using its formal parameter variables.
In fact, taking a step back, the whole processor system
generated by Muen can be viewed as a parametric program with
parameter values like the schedule, subject details, page tables, and memory
elements being filled in by the generator based on the input
specification.

This view suggests a novel two-step technique for verifying generative
systems that can be represented as parametric programs.
We call this approach \emph{conditional parametric refinement}.
We first perform a general verification 
step (independent of the input spec) to verify that the parametric
program refines a parametric abstract specification,
\emph{assuming} certain natural conditions on the parameter values (for
example \emph{injectivity} of the page tables) that
are to be filled in.
This first step essentially tells us that for \emph{any} input
specification $P$, if the parameters generated by the system generator
% (like the Muen generator) 
satisfy the
assumed conditions, then the generated system is correct vis-a-vis the
abstract specification.
In the second step, which is \emph{input-specific}, we check
that for a given input specification, the
assumptions actually hold for the generated parameter values.
This gives us an effective verification technique for verifying
generative systems that lies
somewhere between verifying the generator and translation validation.

We carried out the first step of this proof technique for
Muen, using the Spark Ada \cite{gnatpro} verification
environment. The effort involved about 20K lines of source
code and annotation.
No major issues were found, modulo some subjective assumptions we
discuss in Sec.~\ref{sec:discussion}.
We have also implemented a tool that automatically and efficiently
performs the Step~2 check for a given SK configuration.
The tool is effective in proving the assumptions, leading to
machine-checked proofs of correctness for 12 different input
configurations, as well as in detecting issues 
% in generated parameters 
like undeclared sharing of memory components in some seeded faulty
configurations.
%% Of 8 configurations we analyzed, 4 passed the check, leading to
%% machine-checked proofs of correctness for the generated SKs.
%% The other 4 configurations failed the check due to undeclared
%% sharing of writable memory, exposing a potential problem with the
%% generator.

To summarize our contributions, we have proposed a novel approach
based on parametric refinement for verifying generative systems.
We have applied this technique to verify the Muen separation kernel,
which is a complex low-level software system that makes use of hardware
virtualization features.
We believe that other verification efforts for similar generative
systems can benefit from our approach.
To the best of our knowledge our verification of Muen is the first
work that models and 
reasons about Intel's VT-x virtualization features in the context of
separation kernels.
Finally, we note that our verification of Muen is a \emph{post-facto}
effort, in that we verify an existing system which was not designed
and developed hand-in-hand with the verification process.

\section{Conditional Parametric Refinement} 
\label{sec:parametric-refinement}

We begin by introducing the flavour of classical refinement that
we will make use of, followed by the parametric refinement
framework % which is the main technique 
we employ for our verification task.

\subsection{Machines and Refinement}
\label{sec:classical-refinement}
% \paragraph{Machines and Refinement.}
A convenient way to reason about systems such as Muen is to view
them as an \emph{Abstract Data Type} or simply \emph{machine} to
use the Event-B terminology \cite{abrial2010}.
A \emph{machine type} $\N = (N, \{I_n\}_{n \in N}, \{O_n\}_{n \in
  N})$ contains a finite set of named operations 
$N$, with each operation $n \in N$ having an
associated input domain $I_n$ and output domain $O_n$.
Each machine type contains a designated initialization
operation called \init.
A \emph{machine} $\A$ of type $\N$ above is a structure of the
form $(Q,\{\op_n\}_{n\in N})$, where $Q$ is a set of states, and for
each $n \in N$,
$\op_n : (Q \times I_n) \rightarrow (Q \times O_n)$
is the implementation of operation $n$.
If $\op_n(p,a) = (q,b)$, then when $\op_n$ is invoked with input $a$ in
a state $p$ of the machine, it returns $b$ and changes the state
to $q$.
The \init\ operation is assumed to be independent of the state in which it is
invoked as well as the input passed to it.
Hence we simply write $\init()$ instead of $\init(p,a)$ etc.

The machine $\A$ induces a transition system $\T_{\A}$ in a
natural way, whose states are the states $Q$ of $\A$, and
transitions from one state to another are labelled by triples of
the form $(n,a,b)$, representing that operation $n$ with input $a$ was
invoked and the return value was $b$.
One is interested in the language of \emph{initialized}
sequences of operation calls produced by this transition system, which
models behaviours of the system,
and we call it $\linit(\A)$.

We will consider different ways of representing machines, the most
important of which is as a program in a high-level imperative
programming language. % (or even assembly). 
The global variables of the
program (more precisely \emph{valuations} for them) make up the state
of the machine.
The implementation of an operation $n$ is given by a method definition
of the same name, that takes an argument $a$ in $I_n$, updates the
global state, and returns a value $b$ in $O_n$.
We call such a program a \emph{machine program}.
Fig~\ref{fig:program}(a) shows a program in a C-like
language, that represents a
``set'' machine with operations $\init$, $\mathit{add}$
and $\mathit{elem}$.
The set stores a subset of the numbers 0--3, in a Boolean array of size
4. However, for certain extraneous reasons, it uses an array $T$ to permute the
positions where information for an element $x$ is stored.
Thus to indicate that $x$ is present in the
set the bit $S[T[x]]$ is set to true.
We use the notation ``0..3'' to denote range of integers from 0
to 3 inclusive.

Another representation of a machine could be in the form of a
processor-based system. Here the state is given by the values of
the processor's registers and the contents of the memory. The
operations (like ``execute the next instruction on CPU0'', or ``timer
event on CPU1'') are defined by either the processor
hardware (as in the former operation) or by software in the form of an
interrupt handler (as in the latter operation).

\begin{figure}[th]
  % \hspace{-1.2cm}
\centering
\begin{minipage}[t]{4.1cm}
\begin{scriptsize}
\begin{verbatim}
typedef univ 0..3;
bool S[4];
univ T[4] := {1,2,3,0};

void init() {
  for (int i:=0; i<4; i++)
    S[i] := false;
}
   
void add(univ x) {
  S[T[x]] := true;
}

bool elem(univ x) {
  return S[T[x]];
}
\end{verbatim}
\end{scriptsize}
\end{minipage}
% Abstract Spec
\begin{minipage}[t]{3.6cm}
\begin{scriptsize}
\begin{verbatim}
// Abstract spec
typedef univ 0..3;
bool absS[4];
void add(univ x) {
  absS[x] := true;
}

...

// Gluing relation
\forall univ x:
   S[T[x]] = absS[x]
\end{verbatim}
\end{scriptsize}
\end{minipage}
\\
\hspace{-1.4cm}
\begin{minipage}[b]{4.1cm}
\begin{center}
\begin{small}
(a)
\end{small}
\end{center}
\end{minipage}
\begin{minipage}[b]{3.6cm}
\begin{center}
\begin{small}
(b)
\end{small}
\end{center}
\end{minipage}
\caption{(a) A machine program $P$ implementing a set machine and
  (b) an abstract specification $A$ and gluing relation.}
\label{fig:program}
\end{figure}

\begin{figure}
% Parametric Prog Q[Usize,T]
\centering
\begin{minipage}[t]{4.7cm}
\begin{scriptsize}
\begin{verbatim}
const unsigned Usize;
typedef univ 0..Usize-1;
bool S[Usize];
univ T[Usize];

void init() {
  for (int i:=0; i<Usize; i++)
    S[i] := false;
}
   
void add(univ x) {
  S[T[x]] := true;
}

bool elem(univ x) {
  return S[T[x]];
}
\end{verbatim}
\end{scriptsize}
\end{minipage}
% Abst Param Spec
\begin{minipage}[t]{4cm}
\begin{scriptsize}
\begin{verbatim}
// Abstract parametric spec
const unsigned absUsize;
typedef absUniv 0..absUsize-1
bool absS[absUsize];

void add(absUniv x) {
  absS[x] := true;
}

...

// Assumption R: Usize = absUsize 
     && T injective

// Parametric gluing relation
\forall univ x: 
   S[T[x]] = absS[x]
\end{verbatim}
\end{scriptsize}
\end{minipage} \\[0.3cm]
\hspace{-1.4cm}
\begin{minipage}[b]{4.7cm}
\begin{center}
\begin{small}
(a)
\end{small}
\end{center}
\end{minipage}
\begin{minipage}[b]{4cm}
\begin{center}
\begin{small}
(b)
\end{small}
\end{center}
\end{minipage}
\caption{(a) A parametric machine program $Q[\size,T]$ representing a parametric
      set machine, and 
  (b) abstract parametric specification $B[\abssize]$ and
      parametric gluing predicate.}
\label{fig:program-parametric}
\end{figure}

%% \begin{figure*}[th]
%% \begin{minipage}[t]{3.6cm}
%% \begin{scriptsize}
%% \begin{verbatim}
%% bool S[4];
%% unsigned T[4] := {1,2,3,0};

%% void init() {
%%   for (int i:=0; i<4; i++)
%%     S[i] := false;
%% }
   
%% void add(unsigned x) {
%%   if (x < 4)
%%     S[T[x]] := true;
%% }

%% bool elem(unsigned x) {
%%   return (x < 4 
%%             && S[T[x]]);
%% }
%% \end{verbatim}
%% \end{scriptsize}
%% \end{minipage}
%% %
%% \begin{minipage}[t]{3.8cm}
%% \begin{scriptsize}
%% \begin{verbatim}
%% bool S[UINTMAX];
%% unsigned T[UINTMAX];
%% unsigned Usize;

%% void init() {
%%   for (int i:=0; i<Usize;i++)
%%     S[i] := false;
%% }
   
%% void add(unsigned x) {
%%   if (x < Usize)
%%     S[T[x]] := true;
%% }

%% bool elem(unsigned x) {
%%   return (x < Usize 
%%             && S[T[x]]);
%% }
%% \end{verbatim}
%% \end{scriptsize}
%% \end{minipage}
%% \begin{minipage}[t]{4cm}
%% \begin{scriptsize}
%% \begin{verbatim}
%% // Abstract spec
%% bool absS[4];
%% void add(unsigned x) {
%%   if (x < 4)
%%     absS[x] := true;
%% }
%% ...
%% // Gluing relation
%% \forall unsigned x:
%%   (x < 4 => S[T[x]] = absS[x])
%% --------
%% // Abstract parametric spec
%% bool absS[UINTMAX];
%% unsigned absUsize;
%% ...
%% // Assumption: Usize = absUsize && T injective
%% // Parametric gluing predicate
%% \forall unsigned x: 
%%   (x < Usize => S[T[x]] = absS[x])
%% \end{verbatim}
%% \end{scriptsize}
%% \end{minipage} \\[0.3cm]
%% \begin{minipage}[b]{3.6cm}
%% \begin{center}
%% \begin{scriptsize}
%%   (a)
%% \end{scriptsize}
%% \end{center}
%% \end{minipage}
%% %
%% \begin{minipage}[b]{3.8cm}
%% \begin{center}
%% \begin{scriptsize}
%%   (b)
%% \end{scriptsize}
%% \end{center}
%% \end{minipage}
%% %
%% \begin{minipage}[b]{4cm}
%% \begin{center}
%% \begin{scriptsize}
%%   (c)
%% \end{scriptsize}
%% \end{center}
%% \end{minipage}
%% \caption{(a) A program $P$ implementing a set machine, (b) a
%%   parametric program $Q[\size,T]$ representing a parametric
%%   set machine, and (c) the abstract specifications.}
%% \label{fig:program-parametric}
%% \end{figure*}

Refinement \cite{Hoare75a,Hoare-tr-1985,abrial2010} is a way of
saying that a ``concrete'' machine
conforms to an ``abstract'' one, behaviourally. 
In our setting
of total and deterministic machines,
refinement boils down to containment of sequential
behaviours.
Let $\A = (Q, \{\op_n\}_{n\in N})$  and $\A' = (Q', \{\op_n'\}_{n\in
  N})$ be two machines of type $\N$.
We say $\A'$ \emph{refines} $\A$ if $\linit(\A') \subseteq \linit(\A)$.
One usually makes use of a ``gluing'' relation to exhibit refinement.
A \emph{gluing relation} on the states of $\A'$ and $\A$ above is 
a relation $\rho \subseteq Q' \times Q$.
We say $\rho$ is \emph{adequate} (to show that $\A'$ refines $\A$)
if it satisfies the following conditions:
\label{sec:adequate}
\begin{itemize}
\item(init) Let $\op_{\init}() = q_0$ and $\op'_{\init}() =
  q_0'$.
  Then we require that $(q_0',q_0) \in \rho$.
  Thus, after the 
  machines are initialized, their states must be $\rho$-related.
\item(sim) Let $p \in Q$, and $p' \in Q'$, with $(p',p) \in
  \rho$. Let $n \in N$, $a \in I_n$, and suppose
  $\op_n(p,a) = (q, b)$ and $\op'_n(p',a) = (q', b')$. Then we must
  have $b=b'$ and $(q',q) \in \rho$.
\end{itemize}
It is not difficult to see that existence of an adequate gluing
invariant is sufficient for refinement.

%% If $\A'$ refines $\A$, then every initialized sequence of operation
%% calls in $\A'$ can be mimicked by $\A$; and vice-versa. It follows
%% that $\linit(\A') = \linit(\A)$. 
%% \textbf{$\langle$def added by Inzemam$\rangle$}
%% A gluing relation is an \emph{adequate gluing relation} 
%% if it satisfies the refinement conditions, i.e. init and sim.
%% \textbf{$\langle$end$\rangle$}

When machines are presented in the form of programs, we can use
Floyd-Hoare logic based  code-level 
verification tools (like VCC \cite{VCC} for C, or GNAT Pro
\cite{gnatpro} for Ada Spark), to phrase the refinement conditions
% (init) and (sim)
as pre/post annotations and carry out a
machine-checked proof of refinement \cite{DivakaranDS14}.
The basic idea is to combine both the abstract and concrete programs
into a single ``combined'' program with separate state variables but
joint methods 
that carry out the abstract and concrete operation one after the
other.
The gluing relation is specified as a predicate on the combined state.
Fig.~\ref{fig:program}(b) shows an abstract specification
and a gluing relation, for the set machine program of part~(a).
The refinement conditions (init) and (sim) are phrased as pre/post
annotations on the joint operation methods, in the expected manner.

\subsection{Generative Systems and Parametric Refinement}
\label{sec:parametric-refinement-theory}

% \paragraph{Generative Systems.}
A \emph{generative system} is a program $G$ that given an input
specification $I$ (in some space of valid inputs), generates a machine
program $P_I$.
As an example, one can think of a set machine generator $\setgen$, that given a
number $n$ of type unsigned int (representing the universe size),
generates a program $P_n$ similar to
the one in Fig.~\ref{fig:program}(a), which uses the
constant $n$ in place of the set size 4, and an array $T_n$ of size
$n$, which maps each $x$ in $[0..n-1]$ to $(x+1)\!\!\!\mod n$.

For every $I$, let us say we have an abstract machine (again similar
to the one in Fig.~\ref{fig:program}(b)) say
$A_I$, describing the intended behaviour of the machine $P_I$.
Then the verification problem of interest to us, for the generative
system $G$, is to show that for \emph{each} input
specification $I$, $P_I$ refines $A_I$.
This is illustrated in Fig.~\ref{fig:generative-proof}(a).
We propose a way to address this problem using refinement of
\emph{parametric} programs, which we describe next.

%\subsection{Parametric Refinement}
\paragraph{Parametric Refinement.}

A \emph{parametric} program is like a standard program, except that
it has certain read-only variables which are left
\emph{uninitialized}. These uninitialized variables act like
``parameters'' to the program.
We denote by $P[V]$ a parametric program $P$ with an uninitialized
variable $V$.
As such a parametric program has no useful
meaning (since the uninitialized variables may contain
arbitrary values).
But if we initialize the variable $V$ with a value $v$ passed to the
program, we get a standard program which we denote by $P[v]$.
Thus $P[v]$ is obtained from $P[V]$ by adding the definition $V :=
v$ to the declaration of $V$.
By convention we use uppercase names for parameter variables, and
lowercase names for values passed to the program.
Instead of a single parameter we allow a parametric program to have a
list of parameters $\bar{V}$, and extend our notation in the expected
way for such programs.
%% A parametric machine program of type $\N$ is simply a parametric
%% program that implements a machine of type $\N$.

% \paragraph{Parametric Machine Programs.}
Let $N$ be a set of operation names.
A \emph{parametric machine program} of type $N$ is a parametric
program $Q[\bar{V}]$ containing a method $f_n$ for each operation
$n \in N$.
The input/output types of $f_n$ may be dependent on and derived
from the parameter values.
Given a parameter value $\bar{v}$ for $\bar{V}$, we obtain the program
$Q[\bar{v}]$ which is a machine program.
Each method $f_n$ now has a concrete input/output type which we denote
by $I_n^{\bar{v}}$ and $O_n^{\bar{v}}$ respectively.
$Q[\bar{v}]$ is thus a machine program of type 
$(N, \{I_n^{\bar{v}}\}_{n\in N}, \{O_n^{\bar{v}}\}_{n\in N})$,
and has a set of states that we denote by $S^{\bar{v}}$.
We define the \emph{state space} of $Q[\bar{V}]$, denoted $S^Q$, to be
$\bigcup_{\bar{v}}S^{\bar{v}}$.

Fig.~\ref{fig:program-parametric}(a) shows an example parametric
machine program $Q[\size,T]$, representing a parametric version of the set
program in Fig.~\ref{fig:program}(a).
Given a value 4 for $\size$ and a list $[1,2,3,0]$ for $T$, we get the
machine program $Q[4,[1,2,3,0]]$, which behaves similar to the one of 
Fig.~\ref{fig:program}(a).
We note that the input type of the methods $\mathit{add}$ and
$\mathit{elem}$ depend on the value of the parameter $\size$.

Given two parametric machine programs $Q[\bar{V}]$ and $B[\bar{U}]$ of
type $N$, we are interested in exhibiting a refinement relation
between instances of $Q[\bar{V}]$ and $B[\bar{U}]$.
Let $R$ be a relation on parameter values $\bar{u}$ for $\bar{U}$ and
$\bar{v}$ for $\bar{V}$, given by a predicate on the variables in
$\bar{U}$ and $\bar{V}$.
We say that $Q[\bar{V}]$ \emph{parametrically refines} $B[\bar{U}]$
w.r.t.\@ the condition $R$, if whenever two parameter values
$\bar{u}$ for $\bar{U}$ and
$\bar{v}$ for $\bar{V}$ are such that $R(\bar{u}, \bar{v})$ holds, then 
$Q[\bar{v}]$ refines $B[\bar{u}]$.

We propose a way to exhibit such a conditional refinement, using a
\emph{single} ``universal'' gluing relation, as follows.
Let $Q[\bar{V}]$, $B[\bar{U}]$, and $R$ be as above.
Let $\pi$ be a relation on the state spaces $S^Q$ of $Q[\bar{V}]$ and $S^B$ of
$B[\bar{U}]$, given by a predicate on the variables of $Q[\bar{V}]$
and $B[\bar{U}]$.
We call $\pi$ a \emph{parametric gluing relation} on $Q[\bar{V}]$
and $B[\bar{U}]$.
We say $\pi$ is \emph{adequate}, with respect to the condition
$R$, if the following conditions are satisfied.
In the conditions below, we use the standard Hoare triple notation for total
correctness $\{G\}\ \boxed{P}\ \{H\}$, to mean that a program $P$, when started in a state
satisfying predicate $G$, always terminates in a state satisfying $H$.
We use the superscript $Q$ or $B$ to differentiate the components
pertaining to the programs $Q[\bar{V}]$ and $B[\bar{U}]$ respectively.
\begin{enumerate}
\item (type) For each $n \in N$: $R(\bar{u}, \bar{v}) \implies
  (I_n^{Q,\bar{v}} = I_n^{B,\bar{u}} \myAnd O_n^{Q,\bar{v}} =
  O_n^{B,\bar{u}})$.
\item (init) $\{R\}\, \boxed{\init^{B}();\init^{Q}()}\, \{ \pi \}$.
\item (sim) For each $n \in N$:\ 
 $\{R \myAnd \pi \}\, \boxed{r_B := f^B_n(a); r_Q := f^Q_n(a)}\,  \{ 
	\pi \myAnd r_B = r_Q\}$.
\end{enumerate}
In the program fragments in (init) and (sim) above we assume that
the variable and type declarations are prefixed to the programs.

We can now state the following theorem:
\begin{theorem}
\label{thm:parametric-gluing}
Let $Q[\bar{V}]$ and $B[\bar{U}]$ be parametric machine programs of
type $N$.
Let $R$ be a predicate on $\bar{U}$ and $\bar{V}$,
and let $\pi$ be an adequate parametric gluing relation for $Q[\bar{V}]$ and
$B[\bar{U}]$ w.r.t. $R$.
Then $Q[\bar{V}]$ parametrically refines $B[\bar{U}]$ w.r.t.\@ the
condition $R$.
\end{theorem}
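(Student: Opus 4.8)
The plan is to reduce the parametric statement to the classical refinement criterion of Section~\ref{sec:adequate}. Fix parameter values $\bar{u}$ for $\bar{U}$ and $\bar{v}$ for $\bar{V}$ with $R(\bar{u},\bar{v})$; we must show that $Q[\bar{v}]$ refines $B[\bar{u}]$. Condition (type) already guarantees that $Q[\bar{v}]$ and $B[\bar{u}]$ have the same machine type, so refinement between them is well-defined. The key object is the relation $\rho_{\bar{u},\bar{v}} \subseteq S^{\bar{v}} \times S^{\bar{u}}$ obtained by \emph{freezing} the parametric gluing predicate $\pi$ at $\bar{V} := \bar{v}$, $\bar{U} := \bar{u}$: we put $(q',q) \in \rho_{\bar{u},\bar{v}}$ iff the joint valuation assigning $\bar{v},\bar{u}$ to the parameter variables and $q', q$ to the state variables of $Q$ and $B$ respectively satisfies $\pi$. (This is the step where care is needed: $\pi$ is allowed to mention the parameter variables --- e.g.\ the array $T$ in the running example --- so it becomes a relation on plain states only once the parameters are pinned down.) I would then show that $\rho_{\bar{u},\bar{v}}$ is an adequate gluing relation for $Q[\bar{v}]$ and $B[\bar{u}]$ in the sense of Section~\ref{sec:adequate}, and invoke the observation made there that adequacy implies $\linit(Q[\bar{v}]) \subseteq \linit(B[\bar{u}])$.

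It then remains to discharge the two classical conditions from the three parametric ones. For (init), I would instantiate the total-correctness triple $\{R\}\ \boxed{\init^{B}();\init^{Q}()}\ \{\pi\}$ at $\bar{V} := \bar{v}$, $\bar{U} := \bar{u}$. Since $R(\bar{u},\bar{v})$ holds the precondition is met; since the two initializers act on disjoint sets of state variables and leave the read-only parameter variables untouched, the combined fragment computes exactly $q_0 = \op^{B}_{\init}()$ and $q_0' = \op^{Q}_{\init}()$ (well-defined because $\init$ is state- and input-independent), and the postcondition $\pi$ says precisely $(q_0', q_0) \in \rho_{\bar{u},\bar{v}}$, which is the classical (init). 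For (sim), take $(p',p)\in\rho_{\bar{u},\bar{v}}$, $n\in N$ and $a \in I_n^{Q,\bar{v}} = I_n^{B,\bar{u}}$, and instantiate $\{R \myAnd \pi\}\ \boxed{r_B := f^B_n(a); r_Q := f^Q_n(a)}\ \{\pi \myAnd r_B = r_Q\}$ at the same parameter values. The precondition holds because $R$ constrains only the (frozen) parameters and $(p',p)\in\rho_{\bar{u},\bar{v}}$ supplies $\pi$; running the fragment, $f^B_n(a)$ realizes $\op^B_n(p,a)=(q,b)$ and then $f^Q_n(a)$ realizes $\op^Q_n(p',a)=(q',b')$ --- again using variable disjointness so neither call disturbs the other's state or the parameters --- and the postcondition yields $b=b'$ together with $(q',q)\in\rho_{\bar{u},\bar{v}}$, which is exactly the classical (sim). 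As $\bar{u},\bar{v}$ were arbitrary subject only to $R(\bar{u},\bar{v})$, this establishes parametric refinement w.r.t.\ $R$.

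The main obstacle is not a single deep step but the bookkeeping connecting the Hoare triples --- which live over the combined \emph{parametric} program, with parameter variables still present --- to ordinary state-to-state machine semantics after instantiation. Concretely, one must justify: (i) that a triple valid for the parametric combined program stays valid after prefixing the assignments $\bar{V} := \bar{v}$, $\bar{U} := \bar{u}$ (soundness of substituting values for uninitialized read-only variables); (ii) that, since the $B$- and $Q$-halves of the combined program share no mutable state, their sequential composition in either order faithfully simulates the two machine operations independently, so the ordering chosen in (init)/(sim) is immaterial; and (iii) that freezing the parameters turns the predicate $\pi$ into a genuine relation $\rho_{\bar{u},\bar{v}}$ on $S^{\bar{v}}\times S^{\bar{u}}$. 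Once these points are spelled out, the rest --- including, if one wants to be fully self-contained, a one-line induction on the length of an initialized run re-deriving ``adequate gluing $\Rightarrow$ refinement'' --- is routine.
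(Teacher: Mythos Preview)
Your proposal is correct and follows essentially the same approach as the paper: fix $\bar{u},\bar{v}$ satisfying $R$, use (type) to conclude the instantiated machines have the same type, restrict $\pi$ to $S^{\bar{v}}\times S^{\bar{u}}$, and verify that this restriction satisfies the classical (init) and (sim) conditions. The paper's proof is a terse three-sentence version of exactly this argument, leaving the bookkeeping you spell out in points (i)--(iii) implicit under the phrase ``can be seen to satisfy.''
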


\begin{proof}
Let $Q[\bar{V}]$, $B[\bar{U}]$, and $R$ be as in statement of the
theorem, and let $\pi$ be an adequate gluing relation
w.r.t.\@ $R$.
Consider parameter values $\bar{u}$ and $\bar{v}$ satisfying $R$.
By the (type) condition we know that $B[\bar{u}]$ and $Q[\bar{v}]$
are machines of the same type.
Further, $\pi$ restricted to $S^{\bar{v}} \times S^{\bar{u}}$ is
an adequate gluing relation for the two machines since it
can be seen to satisfy the (init) and (sim) conditions of
Sec.~\ref{sec:adequate}.
\qed
\end{proof}

Consider the parametric machine program
$Q[\size,T]$ in Fig.~\ref{fig:program-parametric}(a), and the
abstract parametric program in % the bottom part of
Fig.~\ref{fig:program-parametric}(b),
which we call $B[\abssize]$.
Consider the condition $R$ which requires that
$\abssize = \size$ and $T$ to be injective.
Let $\pi$ be the parametric gluing predicate
$\forall x: \mathrm{unsigned}, (x < \size) \implies S[T[x]] = absS[x])$.
Then $\pi$ can be seen to be adequate w.r.t.\@ the condition $R$, and
thus $Q[\size,T]$ parametrically refines $B[\abssize]$ w.r.t.\@ $R$.

% \subsection{Verifying Generative Systems using Parametric Refinement}
\paragraph{Verifying Generative Systems using Parametric Refinement.}
Returning to our problem of verifying a generative system, we show a
way to solve this problem using the above framework of conditional
parametric refinement.
Consider a generative system $G$ that given an input specification
$I$, generates a machine program $P_I$, and let
$A_I$ be the abstract specification for input $I$. 
Recall that our aim is to show that for each $I$, $P_I$
refines $A_I$.
Suppose we can associate a parametric program $Q[V]$ with $G$, such
that for each $I$, $G$ can be viewed as generating the value $v_I$ for
the parameter $V$, so that $Q[v_I]$ is behaviourally equivalent to $P_I$.
And similarly a parametric abstract specification $B[U]$, and a
concrete value $u_I$ for each $I$, such that
$A_I$ is equivalent to $B[u_I]$.
Further, suppose we can show that $Q[V]$ parametrically refines $B[U]$
with respect to a condition $R$ on $U$ and $V$.
Then, for each $I$ such that $v_I$ and $u_I$ \emph{satisfy} the condition $R$,
we can conclude that $P_I$ refines $A_I$.
This is illustrated in Fig.~\ref{fig:generative-proof}.
If $R$ is a natural enough condition that a correctly functioning
generator $G$ would satisfy, then this argument would cover all inputs
$I$.
%% Of course, for this technique to be effective our condition $R$ must
%% be permissive enough to cover as many input
%% specifications as possible.

As a final illustration in our running example, to verify the
correctness of the set machine generator $\setgen$, we use the
parametric programs $Q[\size,T]$ and $B[\abssize]$ to capture the
concrete program generated and the abstract specification
respectively.
We then show that $Q[\size,T]$ parametrically refines $B[\abssize]$ w.r.t.\@ the
condition $R$, using the gluing predicate $\pi$, as described above.
We note that the actual values generated for the parameters in this
case (recall that these are values for the  parameters $\size$,
$\abssize$ and $T$) do indeed satisfy the conditions required by $R$,
namely that $\size$ and $\abssize$ be equal, and $T$ be injective.
Thus we can conclude that for each input universe size $n$, the
machine program $P_n$ refines $A_n$, and we are done.

\begin{figure}
\centering
\input{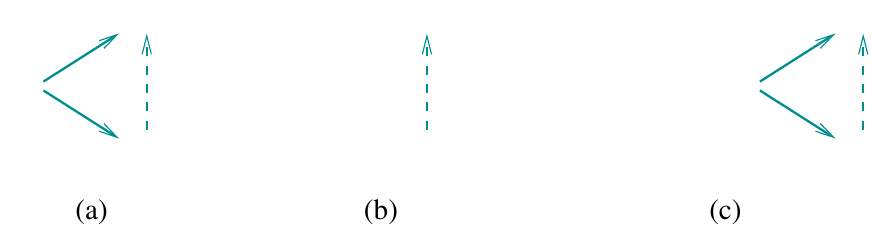_t}
\caption{Proving correctness of a generative system using parametric
  refinement. Part (a) shows the goal, (b) the proof artifacts and
  obligation, and (c) the guarantee. Dashed arrows represent
  refinement, while dashed arrows with $R$ on their tail represent
  parametric refinement w.r.t.\@ $R$.}
\label{fig:generative-proof}
\end{figure}

%%% Local Variables:
%%% mode: latex
%%% TeX-master: "main.tex"
%%% End:

\section{Intel x86/64 with VMX Support}
\label{sec:processor}

In this section we give a high-level view of the x86/64
processor platform, on which the Muen SK runs.
We abstract some components of the system to simplify our
model of the processor system that the Muen toolchain generates.
For a more complete description of the platform we refer 
the reader 
to the voluminous but 
excellent reference \cite{intel2018}.

The lower part of Fig.~\ref{fig:muen-system} depicts the 
processor system and its components.
The CPU components (with the 64-bit general purpose registers,
including the instruction pointer, stack pointer, and control
registers, as well as model-specific registers like the Time Stamp
Counter (TSC)) and physical memory components are standard.
The layer above the CPUs shows components like the VMCS pointer
(VMPTR), the VMX-Timer, and extended page table pointer (EPTP),
which are part of the VT-x layer of the
Virtual Machine Extension (VMX) mode, that supports virtualization.
The VMPTR component on each CPU points to a VMCS structure, which
is used by the Virtual Machine Monitor (VMM) (here the kernel) to
control the launching and exiting of guest processes or ``subjects.''
The CR3 register (which is technically one of the control registers in the CPU
component) and the EPTP component (set by the active VMCS pointed
to by the VMPTR) control the
virtual-to-physical address translation for instructions that
access memory.
On the top-most layer we show the kernel code (abstracted as a
program) that runs on each CPU. The kernel code has two
components: an ``Init'' component that runs on system initialization,
and a ``Handler'' component that handles VM exits due to interrupts.

We are interested in the VMX mode operation of this processor,
in which the kernel essentially runs as a VMM, and subjects run as
guest software on Virtual Machines 
(VMs) provided by the VMM. Subjects could
be bare-metal application programs, or a guest operating system
(called a VM-subject in Muen).
%% In VMX mode, the processor can be in one of two modes: \emph{root}
%% mode, in which the kernel typically runs, or \emph{non-root} mode
%% in which the guest software runs.
%% The processor goes from root mode to non-root mode when the kernel
%% ``launches'' a guest software in a VM.
%
A VM is specified using a VM Control Structure (VMCS), which
stores information about the guest processor state including the
IP, SP, and the CR3 control register values. It also stores values
that control the execution of a subject, like the VMX-timer which sets the
time slice for the subject to run before the timer causes it to
exit the VM, and the
extended page table pointer (EPTP) which translates guest physical
addresses to actual physical addresses. It also contains the
processor state of the host (the kernel).
To launch a subject in a VM, the kernel sets the VMPTR to point to
one of the VMCSs (from an array of VMCSs shown in the figure) using
the VMPTRLD instruction, and then calls VMLAUNCH.
The launch instruction can be thought of as setting the Timer, CR3, and
EPTP components in the VT-x layer, from the VMCS fields.
A subject is caused to exit its VM and return control to the
kernel (called a VM exit), by certain events like VMX-timer
expiry, page table exceptions, and interrupts.
%% A VMLAUNCH causes a transition from root mode to non-root mode, while
%% an VM-Exit has the opposite effect.

We would like to view such a processor system as a machine of the type
described in Sec.~\ref{sec:classical-refinement}.
The state of the machine is the contents of all its components.
The main operations here are as follows.
\begin{enumerate}
\item \emph{Init}: The init code of the kernel is executed on each of
  the processors, starting with CPU0 which we consider to be the
  bootstrap processor (BSP).
\item \emph{Execute}: This operation takes a CPU id and executes
  the next instruction pointed to by the IP on that CPU. The
  instruction could be one that does not access memory, like an
  instruction that adds the contents of one register into
  another, which only reads and updates the register state on that CPU.
  Or it could be an instruction that accesses
  memory, like moving a value in
  a register $R$ to a memory address $a$. The address $a$ will be
  translated via the page tables pointed to by the CR3 and EPTP components,
  successively, to obtain an address in physical memory which will
  then be updated with the contents of register $R$. Some instructions
  may cause an exception (like an illegal memory access), in which
  case we assume the exception handler runs as part of this
  operation.
  \item \emph{Event}: We consider three kinds of events (or
    interrupts). One is the timer tick event on a CPU. This causes the
    Time-Stamp Counter (TSC) on the CPU to increment, and also
    decrements the VMX-Timer associated with the active VM. If the
    VMX-Timer becomes 0, it causes a VM exit, which is then processed
    by the corresponding handler.
    Other events include those generated by a VMCALL instruction, and
    external interrupts. Both these cause a VM exit. The cause of all
    VM exits is stored in the
    subject's VMCS, which the handler checks and takes appropriate
    action for.
\end{enumerate}

\begin{figure*}
\centering
% \vspace{-0.5cm}
\input{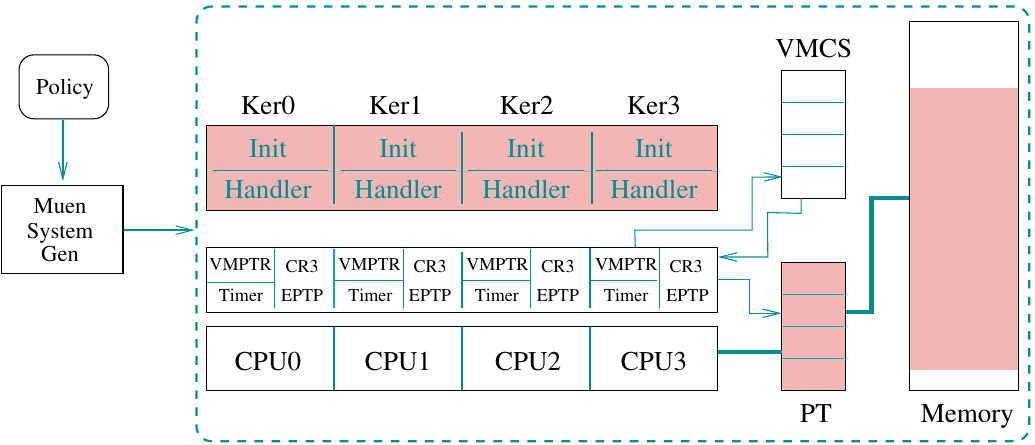_t}
%\scalebox{0.66}{\input{_t}}
\caption{Components of an x86/64bit Processor System with VMX
  Support. Shaded portions are generated by Muen.}
\label{fig:muen-system}
\end{figure*}

%% Let us take timer expiry event as examples to understand this.

%% Timer expiry is a privileged event. There is a timer in VT-x, called VMX preemption timer. Whenever this timer value becomes 0, a VM exit happens. This, we call timer expiry. The VMM or hypervisor looks at the reason of exit in the VMCS. And then the handler part of the VMM or hyervisor schedules the next guest OS to run. And then gives control back to the guest OS.

%%% Local Variables:
%%% mode: latex
%%% TeX-master: "main.tex"
%%% End:

\section{Policy Input Specification}
\label{sec:policy}

The input specification to Muen is an XML file called a \emph{policy}.
It specifies details of the host processor, subjects to be run, and a precise
schedule of execution on each CPU of the host processor system.
% It is presented as an XML file whose key components are described below.
%
% \paragraph{Host Hardware}
Details of the host processor include the
number of % logical 
CPUs, the frequency of the processor, and
the available host physical memory regions.
It also specifies for each device (like a keyboard) the
IO Port numbers, the IRQ number, the vector number, and finally
the subject to which the interrupt should be directed.

% \paragraph{Subject details}
The policy specifies a set of named \emph{subjects},
and, for each subject, the size and starting addresses of
the components in its virtual memory.
These components include the raw binary image of the subject, 
%% \begin{wrapfigure}[9]{r}{0.25\textwidth}
%%   \centering
%% \vspace{-0.3cm}
%% \input{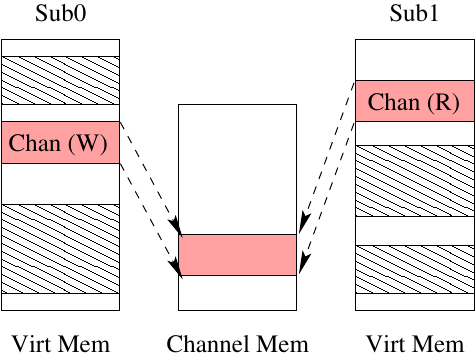_t}
%% % \caption{A memory channel}
%% % \label{fig:channel}
%% \end{wrapfigure}
as well as possible shared
memory ``channels'' it may have declared.
A \emph{channel} is a memory component that can be
shared between two or more subjects.
The policy specifies the size of each channel,
and each subject that uses the channel specifies
the location of the channel in its virtual address space,
along with read/write permissions to the channel.
Fig.~\ref{fig:channel} % figure alongside 
depicts a channel $\mathit{Chan}$
shared by subjects
$\mathit{Sub0}$ (with write permission) and $\mathit{Sub1}$ (with
read permission).

\begin{figure}
	\centering
	\input{}
	\caption{A memory channel shared by two subjects}
	\label{fig:channel}
\end{figure}

% \paragraph{Schedule}
The schedule is a sequence of \emph{major frames},
that are repeated in a cyclical fashion.
Each major frame specifies a schedule of execution for each CPU,
for a common length of time.
Time is measured in terms of \emph{ticks}, a unit of time
specified in the policy.
A major frame specifies for each CPU a sequence of \emph{minor
  frames}.
Each minor frame specifies a length in ticks, and the subject
to run in this frame.
% For a given major frame, the cumulative length of
% minor frames on each CPU, must be the same.
A subject can be assigned to only one CPU in the
schedule.
An example scheduling policy in XML is shown in
Fig.~\ref{fig:schedpolicy}(a), while
Fig.~\ref{fig:schedpolicy}(b) shows the same schedule viewed as
a clock. Each CPU is associated with one track in the clock.
The shaded portion depicts the passage of time (the tick count) on
each CPU.
%% Subjects 1 and 2 are
%% allocated to CPU 0. Subjects 3 and 4 are allocated to CPU 1. On
%% CPU 0 subject 1 will run for the first 40 ticks, then for the next
%% 40 ticks subject 2 will run and reaches the end of major frame
%% 0. On CPU 1, subject 3 runs for 80 ticks and reaches the end of
%% major frame 0 where both CPUs will synchronise. After the two CPUs
%% synchronise, the major frame 1 starts. The number of ticks in a
%% major frame remains the same on all the CPUs. As here in the major
%% frame 0, total number of ticks on each CPU is 80 ticks. Similarly
%% it is 120 ticks on both CPUs in the major frame 1. In major frame
%% 1, subject 1 gets to execute on CPU 0 for 80 ticks and subject 4
%% on CPU 1 for 60 ticks. After subject 1 finishes 80 ticks, subject
%% 2 executes for 40 ticks. On CPU 1, after 60 ticks, subject 3 runs
%% for 60 ticks. Now again both the CPUs synchronise. Since, Muen
%% uses round-robin scheduling, the major frame 0 starts again. This
%% is demonstrated in figure \ref{fig:schedpolicy_clock} as a clock where
%% pointers to the minor and the major frame are like hands of a
%% clock and it keeps on moving around the dial.

\begin{figure}
% \vspace{-0.5cm}
\begin{minipage}{4.7cm}
\centering
\begin{scriptsize}
%\begin{tiny}
\begin{verbatim}
<scheduling tick_rate="10000">
 <major_frame>
  <cpu id="0">
   <minor_fr sub_id="1" ticks="40"/>
   <minor_fr sub_id="2" ticks="40"/>
  </cpu>
  <cpu id="1">
   <minor_fr sub_id="3" ticks="80"/>
  </cpu>
 </major_frame>
 <major_frame>
  <cpu id="0">
   <minor_fr sub_id="1" ticks="80"/>
   <minor_fr sub_id="2" ticks="40"/>
  </cpu>
  <cpu id="1">
   <minor_fr sub_id="4" ticks="60"/>
   <minor_fr sub_id="3" ticks="60"/>
  </cpu>
 </major_frame>
</scheduling>
\end{verbatim}
% \end{tiny}
\end{scriptsize}
\end{minipage}
%% \begin{minipage}{0.9\textwidth}
%% \quad \quad \quad \quad
%% \end{minipage}\\[0.5cm]
\begin{minipage}{8cm}
\input{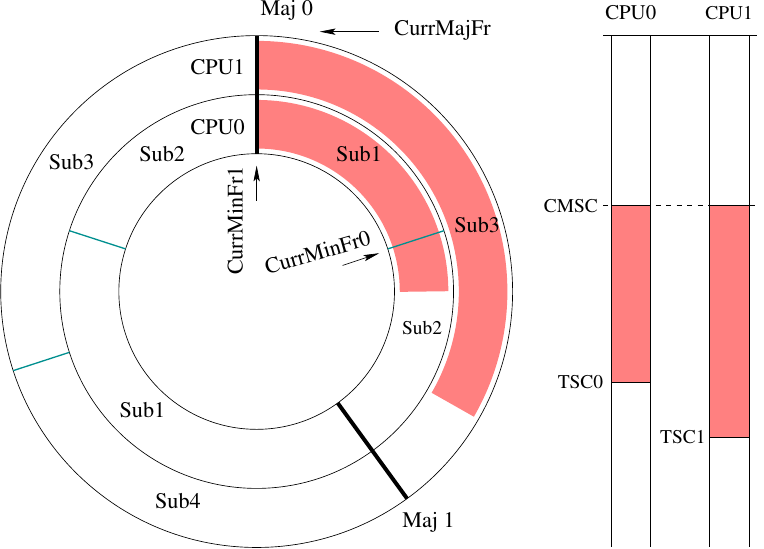_t}
\end{minipage} \\[0.5cm]
\\[0.3cm]
\begin{minipage}[b]{4.7cm}
\begin{center}
\begin{small}
  (a)
\end{small}
\end{center}
\end{minipage}
\begin{minipage}[b]{5.1cm}
\begin{center}
\begin{small}
(b)
\end{small}
\end{center}
\end{minipage}
\begin{minipage}[b]{2cm}
\begin{center}
\begin{small}
\ \ \ \ \ \ \ \ \ \ \ (c)
\end{small}
\end{center}
\end{minipage}
\caption{(a) An Example schedule,
(b) it's clock view, and (c) it's implementation in the Muen kernel.}
\label{fig:schedpolicy}
\end{figure}

%% \begin{figure}
%% 	\centering
%% 	\input{xml-cycle-rep.pdf_t}
%% 	\caption{(a) Clock view of a schedule and (b) its
%%           implementation in the Muen kernel.}
%% 	\label{fig:schedpolicy_clock}
%% \end{figure}

Ticks are assumed to occur independently on each CPU, and
can result in a drift between the times on different CPUs.
The scheduling policy requires that before beginning a new frame,
\emph{all} CPUs must complete the previous major frame.
The end of a major frame is thus a synchronization point
for the CPUs.
%% An important consequence of this is that some subjects may miss a
%% part (or all of) their time slice (i.e. a minor frame).
%% Consider the scenario in the example schedule where the tick count
%% on CPU1 is ahead of 
%% CPU0 by 20 ticks, at the point of time when CPU0 finally finishes
%% the first major frame. Thus the tick count on CPU0 is 80 while on
%% CPU1 it is 100.
%% Now the second major frame can begin, but Sub~4 which was
%% scheduled in the next minor frame on CPU1, will get to execute for
%% only 40 of its 60 ticks. Thus it loses 20 ticks due to the fact
%% that CPU0 finished its previous major frame late.

%% We assume that a policy satisfies certain \emph{validity} conditions.
%% These include, for example, the requirement that the sum of ticks of all
%% the minor frames in a major frame must be the same across all the
%% CPUs, and that each subject is mapped to exactly one CPU.

%%% Local Variables:
%%% mode: latex
%%% TeX-master: "main.tex"
%%% End:

\section{Muen Kernel Generator}
\label{sec:muen}

Given a policy $C$, the Muen system generator \cite{muen} generates the
components of a processor system $S_C$, which is meant to run
according to the specified schedule. This is depicted in
Fig.~\ref{fig:muen-system}, where the Muen toolchain generates the
shaded components of the processor system, like the initial memory
contents, page tables, and kernel code.
%% The main job of the generated kernel is to run the subjects according
%% to the specified schedule.
We describe these components in more detail below.

The Muen toolchain first generates a \emph{layout} of the
memory components of all the subjects, in physical memory.
The contents of these components (like the binary file for the binary
component, and the zeroed-out contents of the memory channels) are
filled into an image that will be loaded into physical memory before
execution begins.
%% A sample portion of the memory layout for a demo configuration is
%% shown in Fig.~\ref{fig:memory-layout}.
%% \begin{figure}%[htbp]
%% \centering
%% \includegraphics[clip, scale=0.8, trim=22.95cm 26cm 0.2cm 10.9cm]{}
%% % \includegraphics[clip, trim=0.5cm 11cm 0.5cm 13cm,height=5.9cm,
%% % width=0.45\tex% twidth]{memory_map.pdf}
%% %	\includegraphics[scale=0.45]{}
%% \caption{A part of the physical memory layout for demo configuration d7}
%% \label{fig:memory-layout}
%% \end{figure}
%
Based on this layout, the toolchain then generates the page tables for
each subject so that when the subject is running and its page table
is loaded in the CR3/EPTP registers, the virtual addresses will be
correctly mapped to the physical locations.

The Muen toolchain then generates a kernel for each CPU, to orchestrate the
execution of the subjects according to the specified schedule on that CPU.
The kernel is actually a \emph{template} of code written in Spark Ada,
and the Muen toolchain
%% \begin{wrapfigure}[13]{r}{0.29\textwidth}
%% % \vspace{-0.5cm}
%% \centering
%% \input{_t}
%% % \caption{Components of the generated kernel}
%% % \label{fig:kernel_box}
%% \end{wrapfigure}
generates the constants for this template based on the given
policy.
The kernel uses data structures like
% to store subject information. The
\subjectspecs\ to store details like
the page table address and VMCS address for each subject,
and \subjectdescs\ to store the state of 
general purpose registers for a subject.
% which are not automatically stored in the VMCS on VM exit).
%
To implement scheduling, the kernel uses the
\schedplans\ structure which is a multidimensional array
representing the schedule for each CPU.
%% It also uses variables like \curmajfr\ which is
%% shared across all CPUs and updated only by the kernel on the BSP, to
%% store the index of the major frame currently in execution.
%% It uses \curminfr\ to keep track of the current
%% active minor frame on each CPU.
% \textbf{$\langle$lines added by inzemam$\rangle$}
The structure \vectorrouting\ is generated by the toolchain
to represent the table which maps an interrupt vector to 
the corresponding destination subject and the destination vector to be
sent to the destination subject.
The kernel also uses a data structure called \globevents\ for each subject
to save pending interrupts when
the destination subject is not active.
% \textbf{$\langle$end$\rangle$}
These structures and variables are shown in
Fig.~\ref{fig:kernel_box}. % in the figure alongside.
% \textbf{$\langle$updated the figure$\rangle$}
\begin{figure}
% \vspace{-0.5cm}
\centering
\input{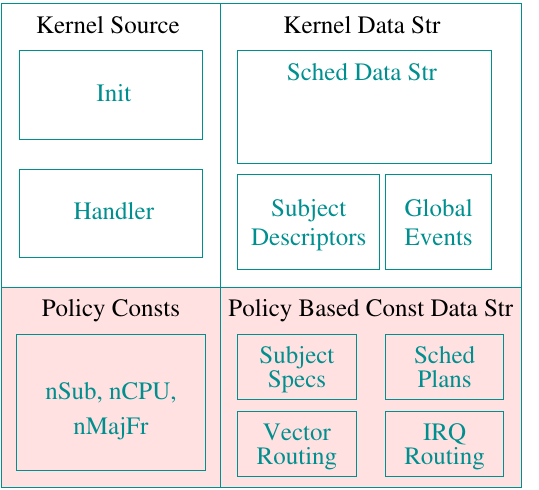_t}
\caption{Components of the generated kernel}
\label{fig:kernel_box}
\end{figure}

The kernel knows the number of ticks elapsed on each CPU from the TSC
register. It uses a shared variable called CMSC (``Current Major
Start Cycle'') to keep track of the 
% time the current major frame is deemed to have begun. 
start of the current major frame.
CMSC is initialized to the value of the TSC on the BSP, at the
time the schedule begins.
Thereafter it is advanced in a fixed periodic manner, based on the
specified length of major frames, whenever a major
frame is completed. This is depicted in
Fig.~\ref{fig:schedpolicy}(c).

%% The Muen kernel generator uses a template to fill in the details from the policy. There are some constants and some constant data structures which are filled in the template by the generator. For example no. of CPUs, no. of subjects, no. of major frames are some of the constants filled by the generator. Similarly \texttt{Scheduling\_Plans}, \texttt{Subject\_Specs} are filled by the generator at appropriate places in the template. This information is actually taken from the input policy in XML and converted into the format required by the kernel in Ada language by the generator. \texttt{Scheduling\_Plans} contains the scheduling information while \texttt{Subject\_Specs} has information about the subjects like VMCS address of the subject, address of the page table for the subject, etc. Kernel uses both these data structures for initialization and handling the VM exits.

%% \begin{figure}
%% 	\centering
%% 	\input{kernel-box.pdf_t}
%% 	\caption{Components of the generated kernel}
%% 	\label{fig:kernel_box}
%% \end{figure}

We now explain the specific Init and Handler parts of the kernel.
%% The kernel is invoked during initialization and to handle a VM exit
%% which in our restricted setting means a VMX Timer expiry event.
%% Fig.~\ref{fig:kernel-execution} shows this view of the kernel operation.
%% After getting control from the bootloader, the
%% kernel \emph{intializes} the system and then launches the first
%% subject. VM-exit happens because of some events (of those we are only
%% discussing \emph{timer-expiry}), which we shall discuss later in this
%% section, because of which the control comes back to the kernel. As
%% shown in figure \ref{kernel_box}, the main parts of the kernel are
%% \texttt{init} and \texttt{handler}. After the VM exit,
%% \texttt{handler} part of the kernel handles the VM exit and gives back
%% control to the appropriate subject and this cycle goes on as shown in
%% the figure.
%
%% \begin{figure}
%% 	\centering
%% 	\input{_t}
%% 	\caption{Kernel execution}
%% 	\label{fig:kernel-execution}
%% \end{figure}
%
In the initialization phase the kernel
sets up the VMCS for each subject.
For this the Init
routine reads the 
\subjectspecs\ structure generated by Muen, to fill in
fields like the page table addresses and IP and SP register values, in
each subject's VMCS.
The kernel on each CPU finally sets the VMX Timer value in the VMCS
for the subject that begins the schedule, loads its VMCS address in the
VMPTR, and does a VMLAUNCH.
%% As shown in figure \ref{kernel_box} there are some
%% scheduling data structures which kernel uses to keep track of the
%% current minor frame, current major frame, time remaining for the
%% current subject, etc. \texttt{init} also initializes these data
%% structures. These data structures (also shown in the clock view in
%% figure \ref{schedpolicy_clock}) include:
%% \begin{enumerate}
%% 	\item \texttt{Current\_Minor\_Frame} - it is used to keep track of the current minor frame.
%% 	\item \texttt{Current\_Major\_Frame} - it is used to keep track of the current major frame
%% 	\item \texttt{Current\_Major\_Start\_Cycles} - it is used to keep track of the TSC (TimeStamp Counter on hardware) value when the major frame should have started.
%% \end{enumerate}
%% In figure \ref{schedpolicy_clock}, \texttt{Current\_Major\_Frame}
%% points to the major frame 0. There are two
%% \texttt{Current\_Minor\_Frame}'s, one for each CPU. This is because
%% there may be different number of minor frames  in a major frame on
%% each CPU as we have seen in scheduling policy example in figure
%% \ref{schedpolicy_xml}. \texttt{Current\_Minor\_Frame} for CPU 0 in the
%% figure points to the second minor frame while for CPU 1 it points to
%% the first minor frame.
%
%% As discussed in section \ref{sec:processor}, there is a timer called
%% VMX preemption timer. \texttt{init} also initializes this timer by
%% looking at the number of ticks to be allocated for the current subject
%% as given in the \texttt{Scheduling\_Plans}, which was generated by the
%% generator.

The handler part of the kernel is invoked whenever there is a VM exit.
% due to the VMX Timer expiry.
If the exit is due to a VMX Timer expiry,
%% The handler finds the reason of exit from the exited subject's VMCS
%% and 
the handler checks whether it is at the end of the major frame by
looking at \schedplans. If it is not at the end of
a major frame, it just increments %\curminfr.
the current minor frame.
If it is at the end of a major frame, and not all CPUs have
reached the end of the current major frame, the current CPU waits in a
sense-reversal barrier \cite{herlihy2011art}. 
When all other CPUs reach the end of the
current major frame, they cross the barrier and
the frame pointers are updated by the BSP.
%% \curmajfr, \curminfr, and CMSC are updated by BSP. 
The VMX Timer for the subject to be scheduled next is set
to the time remaining in the current minor frame, calculated using the
fact that (TSC - CMSC) time has already elapsed.
The kernel then does a VMLAUNCH for the subject.
% to begin execution.

% \textbf{$\langle$paragraph added by Inzemam$\rangle$}\\
If the exit is due to an external interrupt with some vector $v$,
the VM exit handler finds out the destination subject and the
destination vector corresponding to $v$ from \vectorrouting. % which is generated from the policy by the tool-chain.
Then the handler sets the bit corresponding to the destination vector in
\globevents\ for the destination subject.
Whenever the destination subject is ready to handle the interrupt,
the VM exit handler of the kernel \emph{injects} the pending interrupt and 
clears the entry for it in the \globevents\ structure.
% \textbf{$\langle$end$\rangle$}

In this work we focus on Ver.~0.7 of Muen.
The Muen tool chain is implemented in Ada and C, and comprises about
41K lines of code (LoC).
The kernel template (in Spark Ada) % Ada
is about 3K LoC.

\section{Proof Overview}
\label{sec:proof-overview}

% \paragraph{Proof Overview.}
Given a policy $C$, let $S_C$ denote the processor system generated by
Muen.
Let $T_C$ denote an abstract machine spec for the system
$S_C$ (we describe $T_C$ in the next section).
Our aim is to show that for each valid policy $C$, $S_C$
refines $T_C$.
% This is shown in Fig.~\ref{fig:muen-proof}(a).
We use the parametric refinement technique of
Sec.~\ref{sec:parametric-refinement} to verify the functional
correctness of the Muen system.
Fig.~\ref{fig:muen-proof} shows how we achieve this.
%% \begin{wrapfigure}[7]{r}{0.25\textwidth}
%% % \vspace{-0.5cm}
%% \centering
%% \input{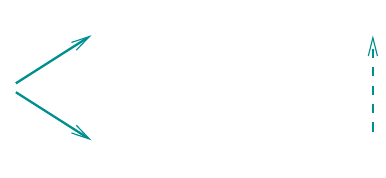_t}
%% % \caption{Muen correctness proof}
%% % \label{fig:muen-proof}
%% \end{wrapfigure}
We first define a parametric program $Q[\bar{V}]$ that models
the generic system generated by Muen, so that for a given policy $C$,
if $\bar{v}_C$ corresponds to the parameter values generated by Muen,
then $S_C$ and $Q[\bar{v}_C]$ are behaviourally equivalent.
In a similar way we define the abstract parametric program
$B[\bar{U}]$, so that with appropriate parameters $\bar{u}_C$,
$B[\bar{u}_C]$ captures the abstract spec $T_C$.
Next we show that $Q[\bar{V}]$ parametrically refines
$B[\bar{U}]$ w.r.t.\@ a condition $R$.
Fig.~\ref{fig:muen-proof} % The figure alongside
shows the proof artifacts and obligations.
Finally, for a given policy $C$, we check that the parameter values
$\bar{u}_C$ and $\bar{v}_C$ satisfy the condition $R$.

In the next few sections we follow this outline to define the
components $B[\bar{U}]$ and $Q[\bar{V}]$, the parametric refinement
between $Q[\bar{V}]$ and $B[\bar{U}]$, and finally the checking of the
condition $R$ for given policy configurations.

\begin{figure}
\centering
\input{}
\caption{Muen correctness proof}
\label{fig:muen-proof}
\end{figure}

%%% Local Variables:
%%% mode: latex
%%% TeX-master: "main.tex"
%%% End:

\section{Abstract Specification}
\label{sec:abstract-spec}

We describe an abstract specification $T_C$
that implements in a simple way the intended behaviour of the system
specified by a policy $C$.
In $T_C$ each subject $s$ is run on a \emph{separate}, dedicated,
single-CPU processor system $M_s$.
The system $M_s$ has its own CPU with registers, and $2^{64}$ bytes of
physical memory $\vmem$.
For each subject $s$ we have a similar sized array called
\perms\ which gives the
permissions (read/write/exec/invalid) for each byte in its virtual
address space.
The policy maps each subject to a CPU of the concrete machine on
which it is meant to run.
To model this we use a set of \emph{logical} CPUs (corresponding
to the number of CPUs specified in the policy),
and we associate with each logical CPU, the
(disjoint) group of subjects mapped to that CPU.
Fig.~\ref{fig:abstract-TC} shows a schematic representation of $T_C$.

\begin{figure}
\begin{center}
  \input{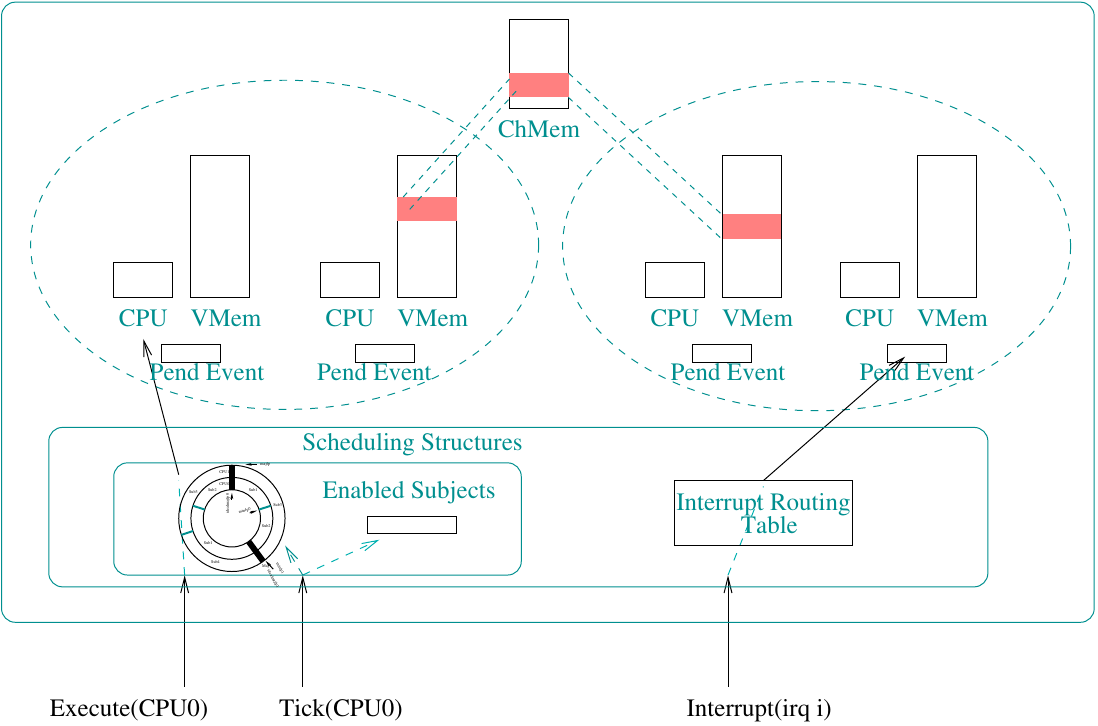_t}
\end{center}
\caption{Schematic diagram of the abstract spcecification $T_C$}
\label{fig:abstract-TC}
\end{figure}

To model shared memory components like channels, we use a separate memory
array $\chmem$, as depicted in % Sec.~\ref{sec:policy}.
Fig.~\ref{fig:channel}.
We assume a
partial function $\cmap_s$ which maps the address space of 
subject $s$ to the channel memory.
% depending on specification in the policy. 
For any access to
memory, it is first checked whether that location is
in a channel using another Boolean function called $\ch_s$,
and if so, the content is updated/fetched from the address 
in \chmem\ given by the $\cmap_s$ function.

We have used an array of bit-vectors (one for each subject) called
\pendingevents\ to implement interrupt handling.
%% It is similar to \globevents\ except that it is an array of bit vectors --
%% one for each subject. 
When bit $i$ of the bit vector for a subject $s$ is set, it represents that 
the interrupt vector $i$ is pending and has to be handled by subject $s$.

There is no kernel in this system, but a \emph{supervisor} whose job 
is to process events directed to a logical CPU or subject, and
to enable and disable subjects based on the scheduling policy and the
current ``time''.
Towards this end it maintains a flag $\enabled_s$ for
each subject $s$, which is set whenever the subject is enabled to run
based on the current time.

To implement the specified schedule, it uses structures \schedplans\,
and \subjectspecs, similar to the Muen kernel.
However it keeps track of time using the clock-like abstraction
depicted in Fig.~\ref{fig:schedpolicy}(b), with the
per-logical-CPU variables \ticks\ (counts each tick event
modulo the total schedule cycle length), 
\minticks\ (reset at the end of every major frame),
\idealmajfp\ (ideal major frame pointer),
\minorfp\ (minor frame pointer),
and \idealcycles\ (cycle counter).
It also uses a global major frame pointer called \majorfp,
and a global cycle counter \cycles.
We use two kinds of pointers here -- ideal and global.
The values of \idealcycles\ and \ticks\ together tell us the
absolute time (number of ticks elapsed) on each logical CPU, and
\idealmajfp\ tells us the major frame based on this time.
The global pointers however are constrained by having to
synchronize on the major frame boundaries, and are updated only
when every CPU has completed the major frame.
We say a CPU is \emph{enabled} if the global and
ideal values of \emph{cycles} match for the CPU, and the global
and ideal values of \emph{majorfp} match.

%% We also maintain a Boolean array \last\ which keeps track of the
%% CPUs whose tick count are lagging behind the furthest.
%% This is useful while updating the global \cycles\ and
%% \majorfp\ variables.
%% Whenever in every tick event we if the ideal and global values
%% of cycles and \texttt{maj\_fp} do not match we check if it is the only
%% last CPU, then we increment the global values.

In the \init\ operation the supervisor initializes the processor
systems $M_s$, permissions array \perms, the channel memory
\chmem, and also the schedule-related 
variables, based on the policy.
The \execute\ operation, given a logical CPU id, executes the next
instruction on the subject machine currently active for that logical
CPU id.
If the CPU is not enabled, the operation is defined to be a no-op.
An \execute\ operation does not affect the state of other subject processors,
except possibly via the shared memory \chmem.
If the instruction accesses an invalid memory address, the system is
assumed to shut down in an error state.
Finally, for the \event\ operation, which is a tick/interrupt event 
directed to a logical CPU or subject, the supervisor updates the
scheduling state, or pending event %table
array, appropriately.

%% \textbf{$\langle$paragraph added by Inzemam$\rangle$}
%% tick event increments the min tick and 
%% \textbf{$\langle$end$\rangle$}

To represent the system $T_C$ concretely, we use an Ada program which
we call $A_C$.
$A_C$ is a programmatic realization of $T_C$, with processor registers
represented as 64-bit numeric variables, and memory as byte arrays of
size $2^{64}$.
The operations \init, \execute, and \event\ are implemented as methods
that implement the operations as described above.

Finally, we obtain a parametric program $B[\bar{U}]$ from $A_C$, by
parameterizing it as illustrated in
Sec.~\ref{sec:parametric-refinement}.
Thus we declare constants like MAXSub, MAXCPU, MAXMajfr, and
MAXMinfr to represent the universe of parameter values we allow.
We then declare parameter variables of these size types, like $\ansubs$ (for
``Abstract Number of Subjects'') of type 1..MAXSub,
% $\ancpus$ of type 1..MAXCPUs, an array \vmem\ of size
% TMAXSub, etc,
which will act as parameters to be initialized later.
We call the list of parameters
% $\ansubs, \ancpus, \aschedplans,
% \asubjectspecs \ldots$ as 
$\bar{U}$.
We then declare other variables and corresponding arrays of these
sizes (e.g.\@ the array of per-subject CPUs will have size
\ansubs).

By construction, it is evident that if we use the values $\bar{u}_C$ for
the respective parameters in $\bar{U}$, 
we will get a machine program $B[\bar{u}_C]$
which is equivalent in behaviour to $A_C$.

\section{Muen System as a Parametric Program}
\label{sec:muen-parametric}

We now describe how we model the system generated by Muen
as a parametric program.
Let $C$ be a given policy.
To begin with we define a machine program $P_C$ that represents
the processor system $S_C$ generated by Muen.
This is done similar to the abstract
specification $A_C$ in Sec.~\ref{sec:abstract-spec}, except that we now
have a \emph{single} physical memory array which we call $\pmem$.
Further, since the processor system $S_C$ makes use of the VT-x
components, we need
to model these components in $P_C$ as well.
To begin with we represent each page table, represented by an
identifier $\pt$, as a $2^{64}$ size
array $\PT_{\pt}$ of 64-bit numbers, with the translation $\pt(a)$
of an address $a$ being modelled as $\PT_{\pt}[a]$.
We model each VMCS as a structure \vmcs\ with fields 
% corresponding to the fields of the VMCS 
as defined in \cite{intel2018}.
We use per-CPU variables $\vmptr$ and $\eptp$ that could contain a
VMCS identifier % (modelled as an index in the array of \vmcs's) 
or a page table identifier respectively.
We also implement instructions like VMRITE and VMPTRLD as method
calls which read/update these program variables and structures.
We also include all the kernel structures described in
Sec.~\ref{sec:muen} (like \subjectspecs\ and \schedplans) as
global structures in $P_C$.
For each subject $s$, \subjectspecs\ contains a page table id which we
call $\pt_s$.

The operations $\init$, $\execute$, and $\event$ are implemented
as method calls, similar to the abstract spec.
The $\init$ code comes from the Init component of the kernel (see
Sec.~\ref{sec:muen}).
We also initialize the physical memory $\pmem$ with the image
% $\pmemimage$ 
produced by the Muen toolchain.
In the $\execute$ method, memory accesses are translated via the
active page table to access the physical memory $\pmem$.
The implementation of the $\event$ operation comes from
the Handler part of the kernel code (in particular for handling VM
exits due to timer expiry and interrupts).

%% While handling the timer expiry, we are not handling barriers at the
%% end of the minor frames.
%% At the end of a major frame, there are barriers. 
%% In $P_C$, unlike $S_C$, \curmajfr, \curminfr{} and CMSC are 
%% updated by the last CPU reaching the end of the major frame.

Finally, we move from $P_C$ to a parametric program $Q[\bar{V}]$.
This is done in a similar way as in
Sec.~\ref{sec:parametric-refinement} and \ref{sec:abstract-spec}.
Apart from the constants like MAXSub,
%  which we use as sizes to declare our array variables, 
we use the parameters
\nsubs, \ncpus, \schedplans,
\subjectspecs, 
% \textbf{$\langle$added by Inzemam$\rangle$}
\vectorrouting,
% \textbf{$\langle$end$\rangle$},
\pmem, and \PT.
We refer to this list of parameters as $\bar{V}$,
and refer to the resulting parametric program as $Q[\bar{V}]$.
Once again, for an appropriate list of values $\bar{v}_C$
corresponding to a given policy $C$, we believe that $Q[\bar{v}_C]$ is
equivalent to $P_C$, which in turn is equivalent to $S_C$.

\section{Parametric Refinement Proof}
\label{sec:proof}

We now show that the parametric version of the Muen system
$Q[\bar{V}]$ conditionally refines the parametric abstract spec
$B[\bar{U}]$.
From Sec.~\ref{sec:parametric-refinement-theory}, this requires us
to identify the condition $R$, and find a gluing relation
$\pi$ on the state of parametric programs $Q$ and $B$ such that the
refinement conditions (type), (init), and (sim) are satisfied.

We use a condition $R$ whose key conjuncts are the following conditions:
\begin{itemize}
\item $R_1$: The page tables $\pt_s$ associated with a subject $s$
  must be \emph{injective} in that no two virtual addresses, within a
  subject or across subjects, may be mapped to the same physical
  address, \emph{unless} they are specified to be part of a shared
  memory component.
  More precisely, for each $s,s'$, and addresses $a \in \vmem_s$, $a'
  \in \vmem_{s'}$, we have:
  \begin{enumerate}
  \item ($\ch_s(a) \,\myAnd\, \ch_{s'}(a') \,\myAnd\, \cmap_s(a) =
  \cmap_{s'}(a'))$ \\
    $\implies \pt_s(a) = \pt_{s'}(a')$;
  \item $[s \neq s' \myAnd \neg (\ch_s(a) \,\myAnd\, \ch_{s'}(a')
    \,\myAnd\, \cmap_s(a) = \cmap_{s'}(a'))]$ 
    $\implies \pt_s(a) \neq \pt_{s'}(a')$; and
  \item $[a \neq a' \myAnd \neg (\ch_s(a) \,\myAnd\, \ch_{s'}(a')
    \,\myAnd\, \cmap_s(a) = \cmap_{s'}(a'))]$
    $\implies \pt_s(a) \neq \pt_{s'}(a')$.
  \end{enumerate}
  %% For each subject $s$, $\pt_s$ must be \emph{injective} in
  %% that no two
  %% addresses are mapped by the page table to the same physical
  %% address. Further, addresses across subjects must be injective (for
  %% no two subjects $s$ and $s'$, and addresses $a$ and $a'$, should
  %% $\pt_s(a) = \pt_{s'}(a')$, \emph{unless} $a$ and $a'$ are part of the
  %% same channel or a read-only memory component.
\item $R_2$: For each subject $s$, the permissions (rd/wr/ex/present)
  associated with an address $a$ should match with the permissions for
  $a$ in $\pt_s$.
\item $R_3$: For each subject $s$, no invalid virtual address is mapped to a physical address by page table $\pt_s$.
\item $R_4$: For each \emph{valid} address $a$ in a subject $s$
  % (i.e.\@ $a$ belongs to one of the memory components of $s$), 
  the
  initial contents of $a$ in $\vmem_s$ should match with that of 
  $\pt_s(a)$ in the physical memory $\pmem$.
\item $R_5$: The values of the parameters (like \nsubs,
  \subjectspecs, \schedplans\ and \IOBitmap) in the concrete should
  match with those in the abstract.
%% \item $R_6$: The deadlines of all the minor frames in \schedplans{} are > 0 and < $2^{32}$ and are in strictly increasing order in a major frame.
%% \item $R_7$: The deadline of the last minor frame in a major frame in \schedplans{} is equal to the \texttt{Major\_Frames} value indexed at that major frame.
\end{itemize}

%% In this section we are going to talk about how we proved the parametric conditional refinement between $Q[\bar{V}]$ and $B[\bar{U}]$.

%% As briefly described in section \ref{sec:parametric-refinement}, we do this by taking a combined abstract-concrete program. Here we define a combined state having both abstract and concrete variables. We also define the combined methods \texttt{combined\_init()}, \texttt{combined\_write()} and \texttt{combined\_tick()} for \texttt{init()}, \texttt{write()} and \texttt{tick} respectively. We juxtaposed the methods defined in the previous sections to define the combined methods. Again as mentioned in section \ref{sec:parametric-refinement}, to prove the refinement we need to define the gluing relation and prove (init) and (sim) conditions.

The gluing relation $\pi$ has the following key conjuncts:
\begin{enumerate}
\item The CPU register contents of each subject in the abstract
  match with the register contents of the CPU on which the subject
  is active, if the subject is enabled, and with the subject
  descriptor, otherwise.
  % or with the values of the registers in the subject
  % descriptor (if it is not enabled).
\item For each subject $s$, and valid address $a$ in its virtual
  address space, the contents of $\vmem_s(a)$ and $\pmem(\pt_s(a))$
  should match.
\item The value $(\mathrm{TSC} - \mathrm{CMSC})$ on each CPU in the
  concrete, should 
  match with how much the ideal clock for the subject's logical CPU is
  ahead of the beginning of the current major frame in the abstract.
%% TSC - CMSC on every CPU should be equal to the difference between the current position marked by (\texttt{ideal\_cycles}, \texttt{ideal\_maj\_fp}, \texttt{min\_fp}, \texttt{ticks} and \texttt{min\_ticks}) and the position where the current major frame started, \texttt{Maj\_Frames\_End[maj\_fp - 1]}
\item The major frame pointer in the abstract and concrete should
  coincide, and the minor frame pointers should agree for enabled CPUs.
%% The major frame pointer in the concrete, called \texttt{current\_major\_frame} is equal to the global major frame pointer in the abstract, called \texttt{maj\_fp}
%% \item The minor frame pointers in the abstract and the concrete agree
%%   on each CPU whenever the corresponding CPUs are enabled in the
%%   abstract.
%% minor frame pointers in the abstract (\texttt{min\_fp}) and the concrete (\texttt{current\_minor\_frame}) are equal on each CPU whenever the corresponding CPUs are enabled in the abstract.
\item On every enabled CPU, the sum of the VMX Timer and \minticks\
  should equal the deadline of the current minor frame on that CPU.
\item A CPU is waiting in a barrier in the concrete whenever the CPU
  is disabled in the abstract and vice-versa.
\item The contents of the abstract and concrete pending event tables 
% \apeventtab\ and \peventtab\ 
should agree.
\end{enumerate}

We carry out the adequacy check for $\pi$, described in
Sec.~\ref{sec:parametric-refinement-theory},
by constructing a ``combined'' version of
$Q$ and $B$ that has the disjoint union of their state variables, as
well as a joint version of their operations, and phrase the
adequacy conditions as pre/post conditions on the joint operations.
In particular, for the (init) condition we need to show that assuming
the condition $R$ holds on the parameters, if we carry out the joint
$init$ operation, then the resulting state satisfies the gluing relation
$\pi$.
To check the (sim) condition for an operation $n$, we assume a joint
state in which the gluing relation $\pi$ holds, and then show that the state
resulting after performing the joint implementation of $n$, satisfies
$\pi$. Once again this is done assuming that the parameters satisfy
the condition $R$.
We carry out these checks using the Spark Ada tool \cite{gnatpro}
which given an Ada
program annotated with pre/post assertions, generates verification
conditions and checks them using provers Z3 \cite{Z3}, CVC4
\cite{BarrettCDHJKRT11}, and Alt-Ergo \cite{conchon12}.

We faced several challenges in carrying out this proof to completion.
% We highlight the key challenges we faced and how we overcame them.
A basic requirement of the gluing relation $\pi$ is that the abstract
%% \begin{wrapfigure}[4]{r}{0.29\textwidth}
%% % \begin{table}
%% % \centering
%% % \vspace{-0.6cm}
%% % \begin{scriptsize}
%% \begin{small}
%% 	\begin{tabular}{|c|c|c|c|}
%% 		\hline
%% 		Artifact & $B[\bar{U}]$ & $Q[\bar{V}]$ & Combined\\
%% 		\hline
%% 		LOC+LOA & 793 & 1914 & 13491\\
%% 		\hline
%% 	\end{tabular}
%% %\end{scriptsize}
%% \end{small}
%% % \caption{Proof artifact sizes}
%% % \label{table:proof-result}
%% % \end{table}
%% \end{wrapfigure}
and physical memory 
contents coincide via the page table map for each
subject.
After a write instruction, we need to argue that this property
continues to hold.
However, even if one were to reason about a given concrete page
table, the prover would never be able to handle this due to the sheer
size of the page table.
The key observation we needed was that the actual mapping of the page table was
irrelevant: all one needs is that the mapping be \emph{injective} as
in condition $R_1$.
With this assumption the proof goes through easily.
A second challenge was proving the correctness of the way the kernel
handles the tick event.
This is a complex task that the kernel carries out, requiring 8
subcases to break up the reasoning into manageable subgoals for both
the engineer and the prover.
The presence of the barrier synchronization before the beginning of a
new major frame, adds to the complexity.
The use of auxiliary variables (like an array of \last\ CPUs), and 
the case-split helped us to carry out this proof
successfully.

Finally, modelling the interrupt handling system, including
injection of interrupts in 
a virtualized setting, was also a complex task.
Here too we had to split the proof of the correctness of interrupt
event into multiple subcases (5 in this case).

Table~\ref{table:proof-result} shows details of our proof effort
in terms of lines of code (LoC) and lines of annotations (LoA) in the
combined proof artifact.
In the combined artifact the LoC count includes 
comments and repetition of code due to case-splits.
The whole proof effort took us an estimated 3-4 person-years.
All proof artifacts used in this project are available at
\url{https://bitbucket.org/muenverification/}.

\begin{table}
\centering
% \vspace{-0.6cm}
% \begin{small}
	\begin{tabular}{|c|c|c|c|c|c|}
		\hline
		\multicolumn{2}{|c|}{$B[\bar{U}]$} & \multicolumn{2}{|c|}{$Q[\bar{V}]$} & \multicolumn{2}{|c|}{Combined}\\
		\hline
		LoC & LoA & LoC & LoA & LoC & LoA\\ %& 1,914 & 20,184\\
		\hline
		793 & 0 & 1,914 & 0 & 13,970 & 6,214\\
		\hline
	\end{tabular}
% \end{small}
\vspace{0.2cm}
\caption{Proof artifact sizes}
\label{table:proof-result}
\end{table}

%% The property $R1$ is important because if we do not have it, we shall need the concrete translation function to prove that after the update to the memory using write instruction, the page tables and the memory are still glued. In the parametric program it would be difficult for the theorem prover to prove without knowing the content of the page tables. So we saw that this is the property which is requierd for the separation of memories and it reduces the difficulty to a great extent.

%% Another difficulty we faced was in proving the correctness of the scheduling in tick operation. Theoretically a CPU may wait for a long period of time in the barrier, such that it can wait ticks equal to the length a major frame. We have assumed that this case never occurs.
% couldn't think of anything else in scheduling to write... I think
% rest is covered in previous sections.

%%% Local Variables:
%%% mode: latex
%%% TeX-master: "main.tex"
%%% End:

\section{Checking Condition $R$}
\label{sec:condition-checking}
We now describe how to efficiently check that for a given
policy $C$, the parameters generated by Muen and those of the abstract 
specification, satisfy the condition $R$.
%
%\section{$R_1$}
%\section{Algorithms For Checking $R$}
%In 64-bit mode, Intel machine allows an address space of $2^48$ addresses.
%The injectivity property, permission match and validity check 
%have to be done for each of these addresses.
Let $N_s$ be the number of subjects.
Let $N_v$ be the maximum size of the virtual address
space, and $N_u$ the actual \emph{used} space 
which is the total size of the virtual memory components of a subject, for a subject.
$N_v$ is typically of the order $2^{48}$ bytes (approx 256TB) while $N_u$ is of the
order of 1.5GB.
Let $P$ denote the size of a paging structure in bytes, which is
of the order of 500K.
A naive way to check the conditions $R_1$ (injectivity), $R_2$ (permission-checking), 
$R_3$ (validity), and $R_4$ (content-matching), would be to use a bit
array for physical memory and iterate over \emph{all} virtual
addresses to check these conditions.
However, this runs in time proportional to $N_v$, and such an
algorithm would take days to run.
In contrast, we give a way to check these properties in time proportional to
$N_u$ for each subject. These algorithms run in a few seconds.
%%We give the idea of the algo here and the details in App.~\ref{app:checking-algos}.

To check these properties efficiently we exploit the following facts
\begin{itemize}
	\item A policy specifies memory in terms of memory components 
				which are blocks of contiguous pages.
	\item Muen generates an intermediate representation of
				physical memory (called the ``B-policy'') 
				which contains the list of all memory components 
				which have to be placed in the physical memory.
				We call this list PMemComponents. 
				Each physical memory component is specified 
				with a name called ``physical name'', size, and 
				the physical address where it has to be placed in the physical memory.
				The B-policy also contains the list of virtual memory components 
				for each subject which actually are mapped to physical memory components.
				The list of virtual memory components for a subject $s$ 
				is denoted by VMemComponents($s$).
				With each virtual memory component in a subject, 
				its name called ``logical name'', 
				logical address which is the address in the subject's virtual memory,
				the permissions of the subject,
				and the physical name of the physical memory component 
				to which it is mapped to are specified.
%				In the physical memory layout generated by Muen, 
%				these memory components are also placed as blocks of contiguous physical frames.
	\item The sum of size of all the memory components is usually much less than the physical address space
        i.e. $N_u$ is small in comparison to $N_v$.
%				In other words, a lot of virtual memory is unused and there are gaps between the memory components.
\end{itemize}
% what do we exploit

% how do we do these checks
\paragraph{Injectivity-check $R_1$} 
Checking the injectivity property of the page tables 
naively will require us to check 
for all pairs of distinct logical addresses 
possibly in two different subjects' address spaces
whether they are mapped to the same physical address 
via the corresponding page tables.
The running time of this naive check will be proportional to $N_v$
where $N_v$ is approximately 256T.
%Exploiting the facts mentioned earlier 
%we propose an algorithm whose running time is proportional to $N_c^2N_s^2$. %check this.
%$N_c$ is the maximum of VMemComponent(s) across subjects.
%$N_s$ is the number of subjects.

Since the memory components are blocks of contiguous pages and contiguous frames 
in virtual and physical address space respectively, 
checking the injectivity property reduces to checking 
whether any two memory components across subjects overlap (actually coincide) in physical memory.
We first check that the memory components in PMemComponents have been laid out
disjointly in the physical memory layout given in the B-policy, by checking for overlap.
Next we check whether the physical memory components 
corresponding to any pair of virtual memory components 
possibly from different subjects are overlapping.
If such an overlap is found, to flag it as an illegal sharing 
we also need to check whether 
one of them is not specified as a channel in the original policy and
one of them is writable. We check that one of them is writable because 
Muen allows sharing of read-only components to reduce the physical memory footprint.
Finally we check whether the page tables are generated according to the
layout in the B-policy. 
For this purpose we translate the virtual address of each page 
in every virtual memory component of each subject and 
check that it matches with the physical address calculated via the B-policy.
This algorithm assumes that $R_3$ holds (validity-check passes) because 
we have not checked this property for invalid addresses 
(logical addresses outside the memory components)
in subjects' virtual address spaces.

\begin{algorithm}[t]
	\begin{algorithmic}[1]
		\ForAll{$A$, $B$ in PMemComponents}
			\If{$ A.\PA  \leq B.\PA $}
				\If{$ A.\PA + A.Size  \geq \ B.\PA $}
					\State Output(\enquote{Illegal sharing detected.});
				\EndIf
			\EndIf	
		\EndFor
	\end{algorithmic}
	\caption{Algorithm to check overlap of physical memory components}
	\label{alg:overlap-physical}
\end{algorithm}

\begin{algorithm}[t]
	\begin{algorithmic}[1]
		\State $\mathcal{M} = \bigcup_{subject\ s} $VMemComponents($s$)
		\ForAll{$A$, $B$ in $\mathcal{M}$}
			\If{$A.\PA \leq B.\PA$}
				\If{$ A.\PA + A.Size  \geq \ B.\PA $}
					\If{$\neg\ (isChannel(A) \land isChannel(B))$} % modify the if-condition
						\If{$isWritable(A) \lor isWritable(B)$} % improve the if-condition
							\State Output(\enquote{Illegal sharing detected.});
						\EndIf
					\EndIf
				\EndIf
			\EndIf
		\EndFor
	\end{algorithmic}
	\caption{Algorithm to check overlap of virtual memory components in physical memory}
	\label{alg:overlap-virtual}
\end{algorithm}

The algorithm to check the overlapping of
two physical memory components in the physical memory layout
is shown in Algorithm~\ref{alg:overlap-physical}.
The algorithm to check the overlapping of physical memory components 
corresponding to any two virtual memory components 
is shown in Algorithm~\ref{alg:overlap-virtual}.
Finally the algorithm to check whether page tables are generated 
according to the layout in the B-policy is shown in Algorithm~\ref{alg:pt-check}.
$A.\LA$ and $A.\PA$ denote the logical address of the memory component $A$ and 
the physical address corresponding to
the logical address of the memory component $A$ respectively.
$A{.}Size$ is the size of the memory component $A$ as given in the B-policy.
In Algorithm~\ref{alg:pt-check} $p.\LA$ and $p.\PA$
denote the logical and physical address of a page $p$ respectively 
as mentioned in the B-policy.

\begin{algorithm}[t]
	\begin{algorithmic}[1]
		\ForAll{$s$ in subjects}
			\ForAll{$A$ in VMemComponents($s$)}
				\State $\page.\LA$ = $A.\LA$; \ \ // logical address
				% \State $\page . \psize$ = $m . \psize$;
				\State $\page . \PA$ = $A . \PA$; \ \ // physical address
				\While{$\page . \LA < A . \LA + A . \psize$}
					\If{$\pt_s(\page . \LA) \neq \page . \PA$}
						\State Output(\enquote{Address mismatch});
					\EndIf
					\State $\page . \LA = \page . \LA + page\_size$;
					\State $\page . \PA = \page . \PA + page\_size$;
				\EndWhile
			\EndFor
		\EndFor
	\end{algorithmic}
	\caption{Algorithm to check whether page tables are generated according to B-policy}
	\label{alg:pt-check}
\end{algorithm}

\paragraph{Permission-check $R_2$} 
Again a naive algorithm to check $R_2$ will check permission for each page
in the corresponding paging structure for each subject and 
match it with the permission given in the policy.
The running time of this algorithm will be of the order $N_sN_v$.
To check $R_2$ efficiently, 
for each virtual memory component in every subject
we check that the permission of the pages in the virtual memory component 
are set according to the permissions mentioned in the B-policy.
Again this check is dependent on the validity-check as 
we do not check the permissions for invalid addresses.
%The running time of this algorithm is $N_sN_c\times maxsize$. %find a better way to write this.
%% Algorithm~\ref{alg:permission} shows the algorithm we used to check $R_2$.
%% The logical and physical address of a page $p$ are denoted by 
%% $p.\LA$ and $p.\PA$ respectively.
%% For a virtual memory component $A$ in the subject $s$, 
%% $A.\rwe$ denotes the permission of the subject $s$ as given in the B-policy.
%% Here $\rwe_s(a)$ is a function to check read/write/exec
%% permissions of the address $a$ in the page table of subject $s$.

%% \begin{algorithm}
%% 	\begin{algorithmic}[1]
%% 		\ForAll{$s$ in subjects}
%% 			\ForAll{$A$ in VMemComponents($s$)}
%% 				\State $\page.\LA$ = $A.\LA$; \ \ // logical address
%% 				% \State $\page . \psize$ = $m . \psize$;
%% 				\State $\page . \PA$ = $A . \PA$; \ \ // physical address
%% 				\While{$\page . \LA < A . \LA + A . \psize$}
%% 					\If{$\rwe_s(\page.\LA) \neq A.\rwe$}
%% 						\State Output(\enquote{Rd/Write/Exec mismatch.});
%% 					\EndIf
%% 					\State $\page . \LA = \page . \LA + page\_size$;
%% 					\State $\page . \PA = \page . \PA + page\_size$;
%% 				\EndWhile
%% 			\EndFor
%% 		\EndFor
%% 	\end{algorithmic}
%% 	\caption{Permission checking algorithm}
%% 	\label{alg:permission}
%% \end{algorithm}

\paragraph{Validity-check $R_3$}
To check that only valid addresses are mapped by the page tables,
a naive way will be to translate each virtual address in each subject and 
check that only for valid virtual addresses
present bit is set in the paging structures.
The running time of this algorithm will be proportional to $N_v$ 
which is huge and it will take days to run.

Muen generates a file per subject to store the page tables based on the B-policy.
Paging structures for a subject
(PT, PD, PDPT and PML4)
are stored in this file as a contiguous sequence of 64-bit words.
%% Each 64-bit word is an entry in a paging structure 
%% which contains the address of the next paging structure 
%% (physical address of a frame in case of PT), bits for permissions, present bit, etc.
% We show the structure of the file storing the paging structures 
% in Fig.~\ref{fig:page-table-file}(b).
%
%% \begin{figure}
%% 	\begin{minipage}{\textwidth}
%% 		\centering
%% 		\includegraphics[clip, scale=0.52, trim=30 270 238 85]{address_translation.pdf}
%% 	\end{minipage}\\
%% 	\begin{minipage}{\textwidth}
%% 		\begin{center}
%% 			\begin{small}
%% 				(a)
%% 			\end{small}
%% 		\end{center}
%% 	\end{minipage}
%% 	\begin{minipage}{\textwidth}
%% 		\centering
%% 		\includegraphics[clip, scale=0.5, trim=200 100 150 50]{}
%% 	\end{minipage}\\
%% 	\begin{minipage}{\textwidth}
%% 		\begin{center}
%% 			\begin{small}
%% 				(b)
%% 			\end{small}
%% 		\end{center}
%% 	\end{minipage}
%% 	\caption{(a) Address translation in Intel x86 machine. 
%% 						(b) Structure of the file storing page tables.}
%% 	\label{fig:page-table-file}
%% \end{figure}
%
In this file, 
in the beginning the entries of the PML4 paging structure are stored 
which are followed by the entries of the PDPT paging structures.
Then the entries of the PDPT paging structures are followed by 
all the entries of the PD paging structures and 
finally the entries of the PT paging structures are stored.
% In Fig.~\ref{fig:page-table-file}(a) S1, S2, S3 and S4 are 
% the sizes of each of the paging structures in the file storing the paging structures.
% In Fig.~\ref{fig:page-table-file}(b) K1, K2, K3 and K4 are the offsets in the file 
% for PML4, PDPT, PD and PT entries which are used 
% when a virtual address is translated.

To check $R_3$ efficiently, we observe that the translation of a valid virtual
address makes use of certain entries of the paging structures
in the file containing the paging structures. 
The present-bits in \emph{only} these entries should be set, and all
others should be unset.
To check this we use arrays $\PTBitArray_s$ of bits, one for each
64-bit entry in the file containing the paging structures for the subject $s$.
We translate each valid virtual address, and set the bits in the
array $\PTBitArray_s$ that correspond to each paging structure entry accessed in
the translation.
After translating all valid addresses of the subject,
we check that there is no entry in the paging structure with the
present-bit set but associated array bit unset. This is
shown in Algorithm~\ref{alg:validity}.
%MAXSIZE denotes the highest physical address assigned in the B-policy.
%This can be calculated by 
%finding the memory component which has the highest physical address,
%and then adding size of the memory component to the physical address.
%% In the algorithm, first we translate each valid virtual address for each subject $s$
%% and mark the bits in $\PTBitArray_s$ corresponding to the entries
%% which are used for the translation of the virtual address.
%% Then we check if any other entries than the ones for which the bits were 
%% marked in previous step are set.
%% If we find any such entry, this means that an invalid address is mapped to a physical address.
In Algorithm~\ref{alg:validity} $PTFile_s$ is the file containing paging structures for subject $s$.
The function Entries($PTFile_s$) returns the set of entries available in the file $PTFile_s$.
The function TestPresentBit($i$) checks whether present bit is set in the entry $i$.
We note that this algorithm runs in time $O(N_u+P)$ for each subject.

\begin{algorithm}
	\begin{algorithmic}[1]
		\State Create a Boolean array $\PTBitArray_s$ for each subject $s$
		\ForAll{$s$ in subjects}
			\ForAll{$A$ in VMemComponents($s$)}
				\State //Marking the bits in $\PTBitArray_s$ corresponding to the entries used for translation of valid virtual addresses
				\ForAll {pages $p$ in $A$}
					\State   setBit($\PTBitArray_s$, PML4\_offset) 
					\State   setBit($\PTBitArray_s$, PDPTE\_offset)
					\State   setBit($\PTBitArray_s$, PDE\_offset)
					\State   setBit($\PTBitArray_s$, PTE\_offset)
				\EndFor
			\EndFor
			\State //Checking present bit is set in any other entries
			\ForAll {$i$ in Entries($PTFile_s$)}
				\If{(TestPresentBit($i$) $= 1$) $\land$  ($\PTBitArray_s[i]$ = 0)}
					\State Output(\enquote{Invalid Page Table Entry $i$})
				\EndIf
			\EndFor
%% 			\For{($x = 0$; $x <$ PTSize($s$); $x++$)}
%% 				\If{(PagetableEntry($s,x$) $\neq 0$) and \\
%% 					\hskip4.5em (TestBit(\PTBitArray, $x$) = 0)}
%% 					\State Print (``Invalid Page Table Entry at $x$'')
%% 				\EndIf
%% 			\EndFor
		\EndFor
	\end{algorithmic}
	\caption{Validity checking algorithm}
	\label{alg:validity}
\end{algorithm}

\paragraph{Content-matching $R_4$ and Structure-matching $R_5$} 
Condition $R_4$ (initial memory contents) is straightforward to
check.
We simply compare the content of the image generated by Muen with
each individual component's content (which is a specified file or
``fill'' element) byte by byte.
%Algorithm shown in Algorithm~\ref{alg:content} checks the condition $R_4$.
%
%\begin{algorithm}
%	\begin{algorithmic}[1]
%		\ForAll{$e$ in FileElements}
%			\State Seek(muenImage, $e . PhysAddr-lowestPhysAddr$)
%			\State Open($e . file$)
%			\State Seek($e . file, e . offset$)
%			\For{(\ j = 0\ ; j $<$ size($e . file$)\ ; j++)}
%				\If{fgetc($e . file$) $\neq$ fgetc(muenImage)}
%					\State Output(\enquote{Data mismatch found.})
%				\EndIf
%			\EndFor
%			\For{(\ ;\ j $< e  .  size$\ ; j++)}
%				\If{fgetc(muenImage) $\neq \lq{0}\rq $}
%					\State Output(\enquote{Data mismatch found.})
%				\EndIf
%			\EndFor
%		\EndFor
%	\end{algorithmic}
%	\caption{Content matching}
%	\label{alg:content}
%\end{algorithm}
%
%\paragraph{Structure-matching $R_5$}
Similarly, condition $R_5$ is checked by algorithmically generating
the abstract parameter data structures and ensuring that the Muen
generated ones conform to them.

\paragraph{Results.}
We implemented our algorithms above in C and Ada, using the Libxml2
library to process policy files, and
a Linux utility \texttt{xxd} to convert the
Muen image and individual files from raw format to
hexadecimal format.

We ran our tool on 16 system configs, 9 of which (D7-*,D9-*)
were available
as demo configurations from Muen.
The remaining configs (DL-*) were configured by us to mimic
% an % MLS system
a Multi-Level Security (MLS) system 
from \cite{rushbysep}.
Details of representative configs are shown % alongside.
in Table~\ref{tab:table_config}.
For each configuration the table columns show the number of subjects,
number of CPUs, the size of physical memory needed on the processor,
the ISO image size generated by the Muen toolchain, the time taken by our condition
checking tool, and finally whether the check passed or not.

We used the 3 configs D9-* (from Ver.~0.9 of Muen) as seeded
faults to test our tool. Ver.~0.9 of Muen generates \emph{implicit}
shared memory components, and this undeclared sharing was
correctly flagged by our tool.

The average running time on a configuration was 5.6s.
The experiments were carried out on
an Intel Core i5 machine with 4GB RAM running Ubuntu 16.04.
%% The configurations passed all our checks, except
%% the four (D9-*) configurations which failed the $R_1$ (injectivity) check.
%% We discovered that this was due to sharing of writable
%% memory components between subjects. This sharing was not declared in the policy. 
%% The system generated by Muen for
%% these configurations would thus \emph{not} be a refinement of the abstract
%% spec. This looks to be a problem with the Muen generator.
%% The details of the configurations and the results of the checks are
%% shown in Tab.~\ref{tab:table_config}.

% \begin{wrapfigure}[12]{r}{0.5\textwidth}
% \vspace{-0.6cm}
\begin{table}
\begin{center}
% \begin{scriptsize}
\begin{small}
\begin{tabular}{|l|r|r|r|r|r|c|}
    \hline
    System
    & Sub
    & CPU
    & PMem
    & Image
    & Time 
    & Check   
    \\
     & & & (MB) & (MB) & (s) & Passed \\
    \hline
        D7\_Bochs  & 8  & 4 & 527.4  & 13.8 & 3.7  & \checkmark\\
    \hline
        DL\_conf1  & 8  & 4 & 506.5 & 12.9 & 3.7 & \checkmark\\
        DL\_conf2 & 9  & 4 & 1552.7 & 15.1& 6.8 & \checkmark\\        
        DL\_conf3  & 12  & 4 & 1050.1  & 23.3 & 6.7  & \checkmark\\
        DL\_conf4  & 16  & 4 & 1571.4  & 15.1 & 9.2  & \checkmark\\
    \hline
        D9\_Bochs  & 10 & 2 & 532.9  & 16.2 & 4.9 & \ding{55}\\
        D9\_vtd     & 16 & 4 & 1057.8 & 18.4 & 5.9 & \ding{55}\\
        D9\_IntelNuc & 10 & 2 & 567.0 & 16.2 & 5.5 & \ding{55}\\         
     \hline
\end{tabular}
\end{small}
% \end{scriptsize}
\end{center}
\caption{Checking condition $R$ on different configs}
\label{tab:table_config}
\end{table}

\section{Basic Security Properties}
\label{sec:security-properties}

While we believe that the property we have proved for Muen in this paper
(namely conformance to
an abstract specification via a refinement proof)
is the canonical security property needed of a
separation kernel, security standards often require
some specific basic security properties to be satisfied.
We discuss below how some of these properties mentioned in
\cite{SKPP,heitmeyer} follow from the verification exercise we
have carried out for Muen.

Let us consider a system $S_C$ generated by Muen,
and the abstract specification $T_C$, for a given
policy $C$. Let us further assume that the generated parameters
satisfy the condition check of Step~2.
Then we know that every (error-free) sequence of operations in the
concrete system $S_C$ can be simulated, via the gluing relation
$\rho$ built on injective page tables, by the abstract system $T_C$.
We now argue that the system $S_C$ must satisfy the specific
security properties below.

\textbf{No exfiltration} This property states that 
the execution of a subject does not influence the state of another
subject.
In our setting, we take this to mean that a write to a memory
address $a$ by a subject $s$ does not affect the memory contents
of subject $s'$, assuming address $a$ is not part of a declared
memory channel between $s$ and $s'$.
Let us consider a sequence of operations in $S_c$ leading to state
$S_i$ where subject $s$ performs an exfiltrating write to address
$a$.
Now by our refinement proof, the abstract system $T_C$ must be
able to simulate the same initial sequence of operations and reach
a state $T_i$ which is glued to $S_i$ via the gluing relation
$\rho$, which in turn is based on page table maps satisfying the
injective property $R_1$.
Now when $s$ makes an exfiltrating write to address $a$ to take
the concrete state from $S_i$ to $S_{i+1}$, the
memory contents of $s'$ must change in going from state $S_i$
to $S_{i+1}$.
However when we perform the same write in the abstract state
$T_i$, the memory contents of $s'$ do not change in going from
$T_i$ to $T_{i+1}$. It follows that the state $T_{i+1}$
\emph{cannot} be glued via $\rho$ to the concrete state
$S_{i+1}$.
This contradicts the simulation property of our proof.
Thus, it follows that no subject can perform an exfiltrating
write.
A similar argument holds to show that $s$ cannot change the other
components (like the register contents) of the state of $s'$.

%% Let us assume that given the proof of conditional parametric refinement,
%% this property is violated by the concrete system, i.e.
%% there is a subject $S_1$ which is able to write to a
%% memory location which is allocated to another subject $S_2$.
%% Since all the access to the physical memory go through
%% page tables, it can be concluded that in this case
%% the page tables of the two subjects $S_1$ and $S_2$ are not injective.
%% But from the second step of our proof we know that
%% the page tables are injective and we also know from the refinement proof
%% that the correct page tables for the corresponding subjects
%% are loaded. Hence a contradiction. 
%% Thus no exfiltration is a corollary of our result.

\textbf{No infiltration} This property states that 
an operation by a subject $s$ should not be influenced by the
state of another subject $s'$.
More precisely, suppose we have two concrete states $S_1$ and
$S_2$ of $S_C$ in which the state of subject $s$ is identical. In
our setting this means that $S_1$ and $S_2$ are glued,
respectively, to abstract states $T_1$ and $T_2$ in which the
states of $s$ are identical.
Now suppose subject $s$ performs an operation (say a read of a
memory location) in $S_1$ and $S_2$ to reach $S_1'$ and $S_2'$
respectively.
Then the state of $s$ in $S_1'$ and $S_2'$ should be identical.
However, this follows from our proof, since by construction the
state of $s$ in $T_1'$ and $T_2'$ obtained by performing the same
operation in the abstract states $T_1$ and $T_2$ respectively,
\emph{must be} identical.
Since $S_1'$ and $S_2'$ must be glued to $T_1'$ and $T_2'$
respectively, the property follows.

\textbf{Temporal separation} This property states that subjects
are executed according to the specified schedule, and that while
they are inactive their state does not change. The latter could
happen for instance if the register state of the previously
executing subject was exposed by not restoring the current
subject's state correctly.
Once again this property follows in our setting since every
sequence of operations by $S_C$ must be matched by the abstract
specification, and by construction the abstract specification
executes according to the specified policy and the state of a
subject does not change while inactive.

We note that the property of non-bypassability from \cite{SKPP}
would require the above three properties to hold.

\textbf{Kernel integrity} This property states that
the kernel state, including its code and data, is not affected
by the operations carried out by a subject.
This property is called the tamper-proof property in \cite{SKPP}.
Though this property is not directly modelled in our setting (note
that we model the kernel code and data as a high-level program
that cannot be accessed by subjects), while checking
condition $R$ we also check that the page tables
generated by Muen satisfy the injective property across all memory
components, including the kernel components, as specified in the
B-policy of Muen.
This effectively ensures the integrity of the kernel.

%% Although the abstract system does not have a kernel, 
%% in the second step of our proof (checking condition R), 
%% we check the injectivity of page tables which includes page tables
%% for the kernel. Hence the translation through page table of a subject 
%% cannot lead to kernel's memory. 
%% Thus second step of our proof implies kernel integrity.

\section{Discussion}
\label{sec:discussion}

% \emph{Discussion.}
% \paragraph{Discussion.}
The validity of the verification proof carried out in this work
depends on several assumptions we have
made.
Some implicit assumptions we have made
include the fact that
processor hardware components like page table translation and
VMX instructions behave the way we have modelled
them. % ; and that the compiler correctly translates kernel code to its
% binary form.

In addition there are several explicit assumptions related to the
way
we have modelled the abstract specification of how
the SK is expected to behave:
\begin{itemize}
\item
When scheduling actually
begins after the initialization routines, the TSCs on all CPUs have the
\emph{same} value.
\item
If any subject performs an illegal instruction (like
  accessing an invalid memory address) the system halts in an
  error state.
\item
The tick count on the 64-bit TSC counter does not overflow
  (this is a mild assumption as it would take
  \emph{years} to happen); Similarly we
  assume that a minor frame length is never more than $2^{32}$
  ticks as the VMX Timer field is only 32-bits wide.
\end{itemize}
If any of these assumptions are violated, the proof will not go
through, and in fact we would have counter-examples to
conformance with the abstract specification.

\begin{figure*}[ht]
\begin{center}
\input{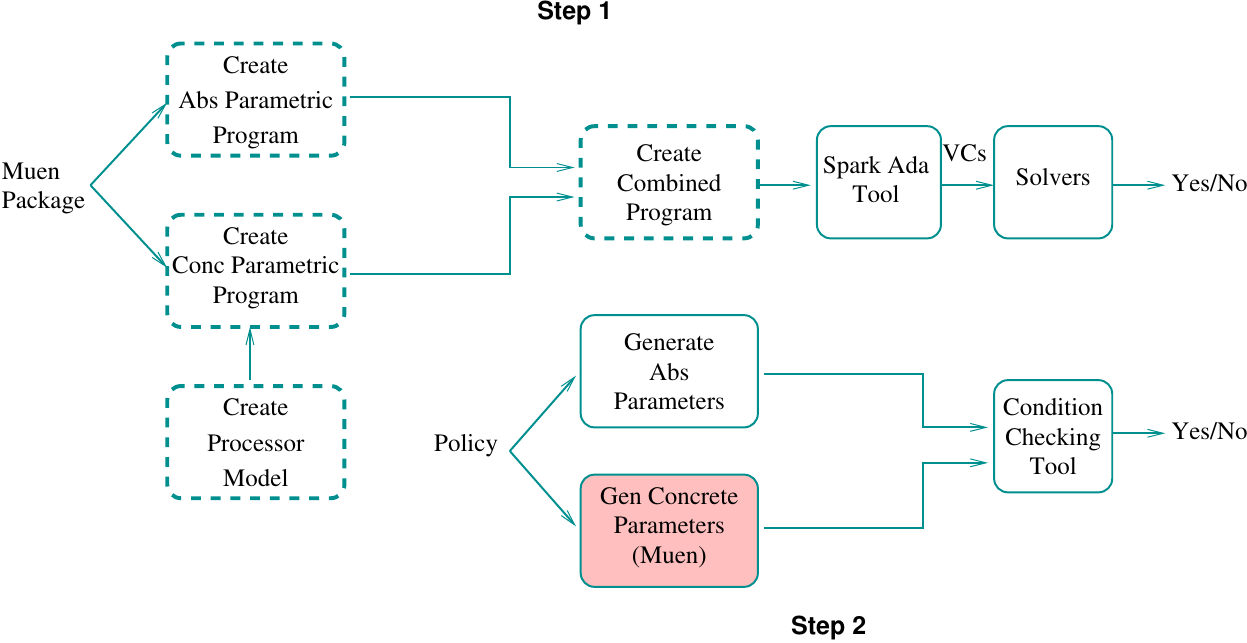_t}
\end{center}
\caption{Components in Muen verification. Untrusted components are
  shown shaded while non-automated (manual) steps are shown with
  dashed boxes.}
\label{fig:components-trust}
\end{figure*}

Finally, we show the various components used in our verification in
Fig.~\ref{fig:components-trust}.
Each box represents a automated tool (full boxes) or manual
transformation carried out (dashed boxes).
Components that we trust in the proof are unshaded, while untrusted
components are shown shaded.

We would like to mention that the developers of Muen were
interested in adding our condition checking tool to the Muen
distribution, as they felt it would strengthen the checks they
carry out during the kernel generation. We have updated
our tool to work on the latest version (v0.9) of Muen, and handed
it over to the developers.

\section{Related Work}
\label{sec:related}

We classify related work based on general OS verification,
verification of separation kernels, and translation validation based
techniques.

\paragraph{Operating System Verification.} 
There has been a great deal of work in formal verification of
operating system kernels in the last few decades.
Klein \cite{Klein3} gives an excellent survey of the
work till around 2000.
In recent years the most comprehensive work on OS verification has
been the work on seL4 \cite{Klein2}, which gave a
refinement-based proof of the functional
correctness of a microkernel using the Isabelle/HOL theorem prover.
They also carry out an impressive verification of page table
translation \cite{TuchKleinOSV04}.	
The CertiKOS project \cite{certikos} provides a technique for
proving contextual functional correctness across the
implementation stack of a kernel, and also handles concurrency.
%% certified kernels. They have proved functional correctness of a 
%% concurrent kernel using Coq.
Other recent efforts include verification of a type-safe
OS \cite{YangH10}, security invariants in ExpressOS
\cite{expressos}, 
and the Hyperkernel project \cite{Luke}.
% CertiKOS is the only functionally verified concurrent kernel among these.
%

While verification of a general purpose OS is a more complex task
than ours---in particular a general kernel has to deal with dynamic
creation of processes while in our setting we have a \emph{fixed} set of
processes and a fixed schedule---the techniques used there cannot
readily reason about generative kernels like Muen.
We would also like to note here that while it is true in such
verification one often needs to reason about parametric components
(like a method that computes based on its parameters), the whole
programs themselves are \emph{not} parametric.
In particular, a standard operating system is \emph{not} parametric: it
begins with a concrete initial state, unlike a parametric program in
which the initial state has unitialized parameters.
Thus the techniques developed in this paper are needed
to reason about such programs.

%% All the OS verification work in the literature, according to our
%% knowledge, verify
%% initialized programs unlike Muen which is a generative system.
%% This makes the verification difficult because 
%% we need to verify generated kernel code for every new policy.
%% Hence to verify Muen using the techniques used in these projects,
%% we need to take Muen generator into consideration which is a complex and 
%% huge program (78K LoC), and thus makes it difficult to apply.
%% Therefore our approach gives a lighter but effective hammer for our setting.

Finally, we point out that none of these works address the use of VT-x
virtualization support.

%% Although the above work guides the OS industry towards a formal verification 
%% based approach to OS development which makes the verification easier, 
%% and hence giving the guarantees of a more foolproof system. 
%% But there is also a need for verification of the existing software systems 
%% which were not developed keeping verification as one of the metrics.
%% These systems require different approaches than those used in the projects mentioned above. 
%% Muen is one such project.

%% There has been some significant work in the field of formal
%% verification of OS kernels and in particular separation kernels over
%% the last decades \cite{Klein3}. In this section, we discuss some of
%% the more relevant works in this area and provide a comparative
%% analysis of our approach.

\paragraph{Verification of Separation Kernels.}
There has been substantial work in formal verification of
separation kernels/hypervisors. seL4 \cite{Klein2} can
also be configured as a separation kernel,
and the underlying proof of functional correctness
was used to prove information flow enforcement.
% after extending the abstract model to include capability distribution. 
Heitmeyer et al \cite{heitmeyer}
proved data separation properties 
%% on the top level specification (TLS)
%% and showed code conformance with the TLS 
using a refinement-based approach for a special-purpose SK called
ED, in an embedded setting.
% While these systems appear to be generative in nature,
As far as we can make out these systems are not generative in nature, 
and either do not use or do not verify hardware virtualization support.
%% ED SK is not a general SK, the design of
%% memory system is too constrained.
Additionally, unlike our work, none of these works (including OS
verification works) are \emph{post-facto}: they are developed 
\emph{along} with verification.

Dam et al \cite{Dam} verify a prototype SK called PROSPER,
proving information flow security on the specification and showing a
bisimulation between the specification and the implementation. 
PROSPER works for a minimal configuration with exactly two subjects,
and is not a generative system. 
% PROSPER also does not use hardware support for virtualization.
%
The Verisoft XT project \cite{verisoftxt} 
attempted to prove the correctness of Microsoft's Hyper-V hypervisor
\cite{Leinenbach:FM2009-80}
and Sysgo's PikeOS, using VCC \cite{VCC}.
While the Hyper-V project was not completed, 
the PikeOS memory manager was proved correct in \cite{Baumann}.
Sanan et al \cite{sanan1} propose an approach towards
verification of the XtratuM kernel \cite{crespo} in Isabelle/HOL, but the
verification was not completed.

\paragraph{Translation Validation Techniques.}
Our verification problem can also be viewed as translation validation
problem, where the Muen generator translates the input policy
specification to an SK system.
The two kinds of approaches here aim to verify the generator code
itself (for example the CompCert project \cite{xavier}) which can be
a challenging task in our much less structured, \emph{post-facto}
setting; or aim to verify the generated output for each specific 
instance \cite{pnueli}.
Our work can be viewed as a via-media between these two approaches:
we leverage the template-based nature of the generated system to
verify the generator conditionally, and then check whether the
generated parameter values satisfy our assumed conditions.

%% The work reported in \cite{pnueli} presented the notion of
%% translation validation as an approach to the verification of
%% translators (compilers, code generators). At every run of the
%% compiler, the input code and the generated code are validated by a
%% "validator" to establish a posteriori that the generated code
%% behaves as prescribed by the input code.
%% This approach, augmented with formal proof of correctness of the
%% validator, can provide formal correctness guarantees as strong as
%% those obtained by formal verification of compiler like CompCert
%% \cite{xavier}.
%% %Using the principles of formally verified translation validatior,
%% the work [] presented a fully mechanized verification of two
%% translation validators for compiler optimization to improve
%% instruction-level parallelism, and their Coq proofs of
%% correctness.

%% Our work of verifying a generative system like Muen separation
%% kernel fits in between both the approaches. We do a one-time
%% verification of Muen kernel \emph{template} using conditional
%% parametric refinement step. Subsequently, for each specification,
%% we validate, a posteriori, the assumptions regarding the specific
%% abstract and concrete machines.

%% In conclusion, we have proposed a technique to reason about
%% \emph{template}-based generative systems,
%% and used it to carry out effective \emph{post-facto} verification of a
%% complex virtualization-based kernel.

%%% Local Variables:
%%% mode: latex
%%% TeX-master: "main.tex"
%%% End:

\section{Conclusion}
\label{sec:conclusion}

In this work we have proposed a technique to reason about
\emph{template}-based generative systems,
and used it to carry out effective \emph{post-facto} verification of
the separation property of a
complex, generative, virtualization-based separation kernel.
In future work we plan to extend the scope of verification to
address concurrency issues that we presently ignore in this
work.

\emph{Acknowledgement.}
We thank the developers of Muen,
Reto Buerki and Adrian-Ken Rueegsegger, 
for their painstaking efforts in helping us understand the Muen
separation kernel.
We also thank Arka Ghosh
for his help in the proof of interrupt handling.

%For citations of references, we prefer the use of square brackets
%and consecutive numbers. Citations using labels or the author/year
%convention are also acceptable. The following bibliography provides
%a sample reference list with entries for journal
%articles~\cite{ref_article1}, an LNCS chapter~\cite{ref_lncs1}, a
%book~\cite{ref_book1}, proceedings without editors~\cite{ref_proc1},
%and a homepage~\cite{ref_url1}. Multiple citations are grouped
%\cite{ref_article1,ref_lncs1,ref_book1},
%\cite{ref_article1,ref_book1,ref_proc1,ref_url1}.
%
% ---- Bibliography ----
%
% BibTeX users should specify bibliography style 'splncs04'.
% References will then be sorted and formatted in the correct style.
%
\newpage
\bibliographystyle{splncs04}
\bibliography{references}
%
%\begin{thebibliography}{8}
%\bibitem{ref_article1}
%Author, F.: Article title. Journal \textbf{2}(5), 99--110 (2016)

%\bibitem{ref_lncs1}
%Author, F., Author, S.: Title of a proceedings paper. In: Editor,
%F., Editor, S. (eds.) CONFERENCE 2016, LNCS, vol. 9999, pp. 1--13.
%Springer, Heidelberg (2016). \doi{10.10007/1234567890}

%\bibitem{ref_book1}
%Author, F., Author, S., Author, T.: Book title. 2nd edn. Publisher,
%Location (1999)

%\bibitem{ref_proc1}
%Author, A.-B.: Contribution title. In: 9th International Proceedings
%on Proceedings, pp. 1--2. Publisher, Location (2010)

%\bibitem{ref_url1}
%LNCS Homepage, \url{http://www.springer.com/lncs}. Last accessed 4
%Oct 2017
%\end{thebibliography}
\end{document}